\documentclass[12pt,a4paper]{article}
\usepackage[utf8]{inputenc}
\usepackage[english]{babel}
\usepackage{amsmath, amssymb, amsfonts}
\usepackage{amsthm} 
\usepackage{physics}
\usepackage[T1]{fontenc}
\usepackage{lmodern}
\usepackage[outline]{contour}
\usepackage{graphicx}
\usepackage{cite}
\usepackage[colorlinks=true, linkcolor=blue, citecolor=red, urlcolor=blue]{hyperref}
\usepackage{hyperref}
\usepackage{tikz}
\usepackage{geometry}
\usepackage{dsfont}
\usepackage{float}
\usepackage{cleveref}
\usepackage{authblk}

\theoremstyle{plain}
\newtheorem{theorem}{Theorem}[section]
\newtheorem{lemma}[theorem]{Lemma}
\newtheorem{proposition}[theorem]{Proposition}
\theoremstyle{definition}
\newtheorem{definition}[theorem]{Definition}
\newtheorem{example}[theorem]{Example}
\theoremstyle{remark}
\newtheorem{remark}[theorem]{Remark}
\title{\textbf{$q$-Differential Operators for $q$-Spinor Variables}}
\author{Julio Cesar Jaramillo Quiceno\thanks{jcjaramilloq@unal.edu.co}}
\date{}
\begin{document}
\maketitle
\begin{abstract}
We introduce a \emph{q}-differential operator adapted to \emph{q}-spinor variables, establishing a corresponding \emph{q}-spinor chain rule and defining both standard and Dirac-type \emph{q}-differential operators. Integral formulas in \emph{q}-spinor variables are derived, and applications to \emph{q}-deformed spinor differential equations are explored through explicit examples. The framework extends existing \emph{q}-calculus to spinorial structures, offering potential insights into quantum deformations of relativistic field equations. We conclude with suggestions for future developments, including a \emph{q}-analogue of the Dirac--Maxwell algebra.
\vspace{1mm}
  \newline
\end{abstract}
\subsubsection*{Keywords:} $q$-Differential operators, $q$-Dirac operator, $q$-spinor variables, integral formulas in $q$-spinor variables, differential equation in $q$-spinor variables.\newline  %
 
msclass: 81Q99, 46E99, 35A24, 15A66 , 16T99 . 17B37.
\section{Introduction}

The mathematical formulation of spinor fields is essential for the study of relativistic quantum mechanics, with foundations established in the works of Cartan \cite{Cart}, Berestetskii et al. \cite{Ber}, and others. Spinors are typically represented as two-component objects, such as $\psi^{\alpha} = \begin{bmatrix} \psi_1 \\ \psi_2 \end{bmatrix}$, and their conjugates are associated with transformation properties under the Lorentz group. The representation theory underlying these structures connects to the group $SU(2)$, whose generators are the Pauli matrices \cite{Zet}.

Gori et al. \cite{G} provide explicit matrix forms for rotations, revealing the origin of the Pauli matrices from representations of spinor transformations. Berestetskii et al. \cite{Ber} formalised the covariant and contravariant behavior of spinors, as well as their scalar products and bispinor structures. Lachieze-Rey \cite{La} further connected these ideas to Clifford algebras, particularly $Cl(3)$, which provides a geometric interpretation of space-time through multivector bases in $\mathds{R}^{1,3}$.

In the context of quantum deformations, \emph{q}-spinor variables and their associated algebraic relations emerge from the \emph{q}-deformed Lorentz algebra \cite{Schmidke}. These variables satisfy non-trivial commutation relations, forming the foundation for defining \emph{q}-differential operators. The structure of the \emph{q}-Lorentzian algebra and its generators, such as $\tau^1, T^2, S^1, \sigma^2$, governs the algebraic relations between spinors and their conjugates \cite{Sch}.

Building on the formalism developed in \cite{Jaramillo2023}, we define \emph{q}-spinor variables and derive corresponding differential and integral operators, including \emph{q}-derivatives and \emph{q}-complex integrals. These operators act on functions defined over quantum spinor variables and respect the \emph{q}-deformed algebraic relations. Moreover, the formulation introduces \emph{q}-Pauli matrices and \emph{q}-Dirac matrices \cite{Schmidt}, extending classical Clifford structures to a deformed setting.

We also review the classical Dirac operator defined over Clifford algebras \cite{Nolder2011, Coulembier-Sommen2011, Faustino-Kalhler2007}, and reinterpret its structure in terms of \emph{q}-deformed analogues. The resulting \emph{q}-Dirac operator inherits many key properties of its classical counterpart, including its role in defining and solving differential equations.

The principal motivation of this work is to investigate the role of \emph{q}-differential operators in spinor calculus, as introduced in \cite{Jaramillo2023}, and to explore their potential in the formulation and solution of \emph{q}-deformed differential equations. Inspired by earlier treatments of classical and quantum Dirac operators \cite{Yafang-Jinyuan2002}, our objective is to construct a consistent differential and integral calculus over \emph{q}-spinor variables, develop explicit solutions to \emph{q}-Dirac-type equations, and outline the implications of this theory for quantum-deformed field equations.

We conclude the introduction by highlighting the long-term goal of formulating a \emph{q}-Dirac--Maxwell algebra and a \emph{q}-real spinor calculus, drawing from both geometric and algebraic perspectives \cite{Zatloukal2022}. These efforts aim to enrich the interplay between quantum groups, spin geometry, and field theory, laying the groundwork for future developments in noncommutative quantum physics. This paper is organized as follows. We briefly recall the preliminaries will be used in this paper in Section \ref{sp2}. The $q$-differential operators for $q$-spinor variables, the $q$-spinor  chain rule, the new $q$-differential operator, the $q$-Dirac differential operator, and the integral formulas in $q$-spinor variables are then proposed in Section \ref{dsp2}.  Finally in the last Section the discussion and   some suggestions for further work are presented.

\section{$q$-Differential Operators for $q$-Spinor Variables}\label{sp2}

The aim of this section is to define a $q$-differential operator for $q$-spinor variables. To begin with, we introduce a fundamental rule of the $q$-spinor differential calculus, namely the \emph{$q$-spinor chain rule}.

\subsection{The $q$-Spinor Chain Rule}

\begin{proposition}
Let $\Psi(u^{\alpha}_{\dot{\beta}}(x_{\mu}))$ be a $q$-spinor function. Then the $q$-spinor chain rule is given by
\begin{equation}\label{chain}
\frac{\partial^{q}\Psi }{\partial^{q}x_{\mu}} = \frac{\partial^{q}\Psi }{\partial^{q}u^{\alpha}_{\dot{\beta}}} \cdot \frac{\partial^{q}u^{\alpha}_{\dot{\beta}}}{\partial_{q}x_{\mu}}.
\end{equation}
\end{proposition}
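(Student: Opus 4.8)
The plan is to argue directly from the definition of the $q$-derivative as a $q$-difference quotient and to exploit the way the intermediate $q$-spinor variable scales under a $q$-dilation of $x_{\mu}$. Recalling the $q$-derivative of a function $f$ of a variable $y$ in the Jackson form $\frac{\partial^{q}f}{\partial^{q}y}=\frac{f(qy)-f(y)}{(q-1)y}$, I would first expand the left-hand side of \eqref{chain} as
\[
\frac{\partial^{q}\Psi}{\partial^{q}x_{\mu}} = \frac{\Psi\left(u^{\alpha}_{\dot{\beta}}(qx_{\mu})\right) - \Psi\left(u^{\alpha}_{\dot{\beta}}(x_{\mu})\right)}{(q-1)x_{\mu}}.
\]
This reduces the statement to a purely algebraic identity between finite differences, and makes clear from the outset that no infinitesimal or Leibniz-type argument is available here.

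The decisive structural input I would then isolate is that the $q$-spinor variable is homogeneous of degree one under the $q$-dilation of its argument, that is $u^{\alpha}_{\dot{\beta}}(qx_{\mu}) = q\,u^{\alpha}_{\dot{\beta}}(x_{\mu})$, which I would justify from the linear transformation behaviour of $q$-spinor variables under the $q$-deformed Lorentz algebra recalled above and in \cite{Jaramillo2023}. Using this relation the numerator becomes $\Psi(q\,u^{\alpha}_{\dot{\beta}}) - \Psi(u^{\alpha}_{\dot{\beta}})$, and inserting a telescoping factor $(q-1)u^{\alpha}_{\dot{\beta}}$ gives
\[
\frac{\partial^{q}\Psi}{\partial^{q}x_{\mu}} = \frac{\Psi(q\,u^{\alpha}_{\dot{\beta}}) - \Psi(u^{\alpha}_{\dot{\beta}})}{(q-1)u^{\alpha}_{\dot{\beta}}} \cdot \frac{(q-1)u^{\alpha}_{\dot{\beta}}}{(q-1)x_{\mu}}.
\]
The first factor is by definition $\frac{\partial^{q}\Psi}{\partial^{q}u^{\alpha}_{\dot{\beta}}}$, while the second, after rewriting $(q-1)u^{\alpha}_{\dot{\beta}} = u^{\alpha}_{\dot{\beta}}(qx_{\mu}) - u^{\alpha}_{\dot{\beta}}(x_{\mu})$, is exactly $\frac{\partial^{q}u^{\alpha}_{\dot{\beta}}}{\partial_{q}x_{\mu}}$, yielding \eqref{chain}. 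It is worth emphasising that the homogeneity relation is precisely the hypothesis that rescues the identity, since for a generic intermediate variable the $q$-composition rule fails and extra correction terms survive.

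The step I expect to be the main obstacle is the passage from a single spinor component to the full expression, in which the repeated indices $\alpha,\dot{\beta}$ are summed over in the Einstein convention. Because the $q$-derivative is a finite $q$-difference rather than an infinitesimal one, the multivariable version does not follow from a first-order expansion; instead it requires decomposing $\Psi\left(u^{1}(qx_{\mu}),u^{2}(qx_{\mu}),\dots\right) - \Psi\left(u^{1}(x_{\mu}),u^{2}(x_{\mu}),\dots\right)$ into a telescoping sum of single-component differences, and the clean factorisation above survives only if each term can be evaluated at a common base point. I would therefore need to argue either that $\Psi$ separates appropriately over the spinor components, or that the mixed-argument correction terms are annihilated by the $q$-spinor commutation relations. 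This is the delicate point where the algebraic structure of the $q$-spinor variables, rather than just their scaling weight, must be used in full.
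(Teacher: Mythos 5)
Your proof is correct as a statement in standard Jackson $q$-calculus, and its engine --- inserting the increment of the intermediate variable and factoring the difference quotient --- is exactly the paper's mechanism. But the two arguments live in genuinely different calculi, and this changes what the ``proof'' consists of. The paper does not use the Jackson quotient: its starting expression is
\[
\frac{\partial^{q}\Psi}{\partial^{q}x_{\mu}}
= \frac{\Psi\bigl((qu)^{\alpha}_{\dot{\beta}}(x_{\mu})\bigr) - q\,\Psi\bigl(u^{\alpha}_{\dot{\beta}}(x_{\mu})\bigr)}{qx_{\mu}-qx_{\mu}},
\]
in which the numerator carries a factor $q$ on the second term, the increment of the intermediate variable is $(qu)^{\alpha}_{\dot{\beta}}-q\,u^{\alpha}_{\dot{\beta}}$ (nonvanishing only because the $q$-shifted spinor $(qu)^{\alpha}_{\dot{\beta}}$ is distinguished from the scalar multiple $q\,u^{\alpha}_{\dot{\beta}}$), and the denominator $qx_{\mu}-qx_{\mu}$ is formally zero. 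After multiplying and dividing by the increment, the paper simply \emph{names} the two resulting factors $\partial^{q}\Psi/\partial^{q}u^{\alpha}_{\dot{\beta}}$ and $\partial^{q}u^{\alpha}_{\dot{\beta}}/\partial^{q}x_{\mu}$; the identification is definitional, so no hypothesis on $u^{\alpha}_{\dot{\beta}}$ is ever invoked. Your route, by contrast, proves a genuine theorem: within Jackson calculus the first factor equals the $q$-derivative of $\Psi$ evaluated at $u^{\alpha}_{\dot{\beta}}(x_{\mu})$ only under the degree-one homogeneity $u^{\alpha}_{\dot{\beta}}(qx_{\mu})=q\,u^{\alpha}_{\dot{\beta}}(x_{\mu})$, and you correctly isolate this as the hypothesis that rescues the chain rule (for generic intermediate variables it fails, as you note). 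What each approach buys: yours is rigorous, with well-defined nonvanishing denominators, at the price of restricting the admissible spinor variables; the paper's is unrestricted in appearance but is essentially a notational decomposition resting on an expression whose denominator vanishes identically. Finally, the multi-component Einstein-summation obstacle you flag at the end is moot relative to the paper, which treats the index pair $(\alpha,\dot{\beta})$ as labelling a single intermediate variable throughout; restricted to that setting, your argument is already complete.
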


\begin{proof}
Consider the $q$-derivative of the composite function:
\begin{equation}\label{ch1}
\frac{\partial^{q}\Psi }{\partial^{q}x_{\mu}} = \frac{\Psi \left((qu)^{\alpha}_{\dot{\beta}}(x_{\mu})\right) - q\, \Psi \left(u^{\alpha}_{\dot{\beta}}(x_{\mu})\right)}{q x_{\mu} - q x_{\mu}}.
\end{equation}
Multiplying and dividing by the non-vanishing quantity \((qu)^{\alpha}_{\dot{\beta}}(x_{\mu}) - q\, u^{\alpha}_{\dot{\beta}}(x_{\mu})\), we may rewrite~\eqref{ch1} as
\begin{equation}\label{ch2}
\frac{\partial^{q}\Psi }{\partial^{q}x_{\mu}} =
\frac{\Psi \left((qu)^{\alpha}_{\dot{\beta}}(x_{\mu})\right) - q\, \Psi \left(u^{\alpha}_{\dot{\beta}}(x_{\mu})\right)}{(qu)^{\alpha}_{\dot{\beta}}(x_{\mu}) - q\, u^{\alpha}_{\dot{\beta}}(x_{\mu})}
\cdot
\frac{(qu)^{\alpha}_{\dot{\beta}}(x_{\mu}) - q\, u^{\alpha}_{\dot{\beta}}(x_{\mu})}{q x_{\mu} - q x_{\mu}}.
\end{equation}
Denoting the first factor as \( \frac{\partial^{q}\Psi}{\partial^{q}u^{\alpha}_{\dot{\beta}}} \) and the second as \( \frac{\partial^{q}u^{\alpha}_{\dot{\beta}}}{\partial^{q}x_{\mu}} \), we obtain
\begin{equation}\label{chain2}
\frac{\partial^{q}\Psi }{\partial^{q}x_{\mu}} = \frac{\partial^{q}\Psi }{\partial^{q}u^{\alpha}_{\dot{\beta}}} \cdot \frac{\partial^{q}u^{\alpha}_{\dot{\beta}}}{\partial^{q}x_{\mu}},
\end{equation}
which establishes the desired result.
\end{proof}

We now proceed to define a novel $q$-differential operator acting on $q$-spinor variables over an orthonormal basis of $\mathds{R}^{n}$. This operator differs from the classical Dirac and Cauchy–Riemann operators considered in~\cite{Coulembier-Sommen2011, Faustino-Kalhler2007, Yafang-Jinyuan2002, Gurlebeck-Hommel, Jaramillo2024}.

\subsection{The new $q$-differential operator for $q$-spinor variables}\label{deformed-new}

The motivation arises from the construction of a differential operator satisfying
\[
D_{q}^{2} = -\frac{\partial^{2}_{q}}{\partial_{q}x^{2}_{\mu}} - \frac{\partial^{2}_{q}}{\partial_{q}x^{2}_{\nu}},
\]
for all $\mu, \nu = 1,2,\dotsc,n$, over an orthonormal basis of $\mathds{R}^{n}$.

\begin{proposition}
Let $\left\lbrace \mathrm{e}_{1}, \mathrm{e}_{2}, \dotsc, \mathrm{e}_{n} \right\rbrace$ be an orthonormal basis of $\mathds{R}^{n}$. The $q$-differential operator $D^{q}$ defined by
\begin{equation}\label{q-D-e}
D_{q} = \mathrm{e}_{\nu} \frac{\partial_{q}}{\partial_{q}x_{\mu}} + \mathrm{e}_{\mu} \frac{\partial_{q}}{\partial_{q}x_{\nu}},
\end{equation}
satisfies the relations:
\begin{align}
\label{e-mu}
\mathrm{e}_{\mu}^{2} \frac{\partial_{q}^{2}}{\partial_{q}x_{\nu}^{2}} + \mathrm{e}_{\nu}^{2} \frac{\partial_{q}^{2}}{\partial_{q}x_{\mu}^{2}} &= -2\delta_{\mu\alpha} \frac{\partial_{q}^{2}}{\partial_{q}x_{\mu} \partial_{q}x_{\alpha}} - \frac{\partial_{q}^{2}}{\partial_{q}x_{\nu}^{2}}, \\
\label{e-nu}
\mathrm{e}_{\mu} \frac{\partial_{q}}{\partial_{q}x_{\nu}} \mathrm{e}_{\nu} \frac{\partial_{q}}{\partial_{q}x_{\mu}} + \mathrm{e}_{\nu} \frac{\partial_{q}}{\partial_{q}x_{\mu}} \mathrm{e}_{\mu} \frac{\partial_{q}}{\partial_{q}x_{\nu}} &= \delta_{\mu\alpha} \frac{\partial_{q}^{2}}{\partial_{q}x_{\mu} \partial_{q}x_{\alpha}}, \quad \mu, \nu, \alpha = 1, 2, \dotsc, n.
\end{align}
\end{proposition}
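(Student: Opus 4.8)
The plan is to derive both identities \eqref{e-mu} and \eqref{e-nu} by squaring the operator $D_q$ of \eqref{q-D-e} and sorting the four resulting products into a \emph{diagonal} group and a \emph{cross} group. Concretely, I would start from
\[
D_q^{2} = \left(\mathrm{e}_{\nu}\frac{\partial_{q}}{\partial_{q}x_{\mu}} + \mathrm{e}_{\mu}\frac{\partial_{q}}{\partial_{q}x_{\nu}}\right)^{2}
\]
and expand the product into its four summands. The two ``like'' products reproduce the diagonal combination $\mathrm{e}_{\nu}^{2}\,\partial_{q}^{2}/\partial_{q}x_{\mu}^{2}$ and $\mathrm{e}_{\mu}^{2}\,\partial_{q}^{2}/\partial_{q}x_{\nu}^{2}$, which is precisely the left-hand side of \eqref{e-mu}, while the two ``mixed'' products $\mathrm{e}_{\mu}(\partial_{q}/\partial_{q}x_{\nu})\,\mathrm{e}_{\nu}(\partial_{q}/\partial_{q}x_{\mu})$ and $\mathrm{e}_{\nu}(\partial_{q}/\partial_{q}x_{\mu})\,\mathrm{e}_{\mu}(\partial_{q}/\partial_{q}x_{\nu})$ form the left-hand side of \eqref{e-nu}.

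The computation rests on two structural facts that I would isolate at the outset. First, the basis elements $\mathrm{e}_{\mu}$ are constant Clifford (equivalently $q$-Pauli/$q$-Dirac) generators and therefore commute with every $q$-derivative operator; this lets me rearrange each factor $\mathrm{e}_{i}(\partial_{q}/\partial_{q}x_{j})\,\mathrm{e}_{k}(\partial_{q}/\partial_{q}x_{l})$ into the form $\mathrm{e}_{i}\mathrm{e}_{k}\,(\partial_{q}^{2}/\partial_{q}x_{j}\partial_{q}x_{l})$. Second, the generators obey the Clifford anticommutation rule $\mathrm{e}_{\mu}\mathrm{e}_{\nu} + \mathrm{e}_{\nu}\mathrm{e}_{\mu} = -2\delta_{\mu\nu}$, so in particular $\mathrm{e}_{\mu}^{2} = -1$. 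Substituting $\mathrm{e}_{\mu}^{2} = \mathrm{e}_{\nu}^{2} = -1$ into the diagonal group turns it into $-\partial_{q}^{2}/\partial_{q}x_{\mu}^{2} - \partial_{q}^{2}/\partial_{q}x_{\nu}^{2}$ and yields \eqref{e-mu} after matching the contracted index $\alpha$, while the anticommutator $\mathrm{e}_{\mu}\mathrm{e}_{\nu} + \mathrm{e}_{\nu}\mathrm{e}_{\mu}$ emerging from the cross group carries the $\delta$-factor appearing on the right-hand side of \eqref{e-nu}.

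The step I expect to be the genuine obstacle is the symmetry of the mixed $q$-partial derivatives, that is, the $q$-analogue of Clairaut's theorem $\partial_{q}^{2}/(\partial_{q}x_{\mu}\partial_{q}x_{\nu}) = \partial_{q}^{2}/(\partial_{q}x_{\nu}\partial_{q}x_{\mu})$, which is exactly what permits combining the two cross products into a single term weighted by the anticommutator. This is not automatic in $q$-calculus, since the $q$-difference quotients in distinct variables rescale their arguments by $q$ independently, so I would verify it directly from the definition of the $q$-derivative, checking that performing the two $q$-shifts in either order produces the same symmetric $q$-difference expression on a suitable class of (say polynomial or $q$-analytic) $q$-spinor functions. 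Once this commutation is secured, the remaining manipulations are routine algebra with the constant generators $\mathrm{e}_{\mu}$, and both displayed relations follow at once; collapsing the two together recovers the motivating identity $D_{q}^{2} = -\partial_{q}^{2}/\partial_{q}x_{\mu}^{2} - \partial_{q}^{2}/\partial_{q}x_{\nu}^{2}$ in the orthogonal case $\mu \neq \nu$.
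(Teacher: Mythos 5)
Your strategy proves the wrong statement. Expanding $D_q^2$ into a diagonal group and a cross group and invoking the Clifford relations $\mathrm{e}_\mu \mathrm{e}_\nu + \mathrm{e}_\nu \mathrm{e}_\mu = -2\delta_{\mu\nu}$ (so $\mathrm{e}_\mu^2 = -1$), together with commutation of the $\mathrm{e}$'s with the $q$-derivatives and symmetry of mixed $q$-partials, does yield the motivating identity $D_q^2 = -\partial_q^2/\partial_q x_\mu^2 - \partial_q^2/\partial_q x_\nu^2$ for $\mu \neq \nu$ --- but it does not yield \eqref{e-mu} and \eqref{e-nu}, and in fact contradicts them. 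Under your own assumptions the diagonal group equals
\[
\mathrm{e}_\mu^2 \frac{\partial_q^2}{\partial_q x_\nu^2} + \mathrm{e}_\nu^2 \frac{\partial_q^2}{\partial_q x_\mu^2} = -\frac{\partial_q^2}{\partial_q x_\mu^2} - \frac{\partial_q^2}{\partial_q x_\nu^2},
\]
whereas the right-hand side of \eqref{e-mu}, with the Kronecker delta forcing $\alpha = \mu$, is $-2\,\partial_q^2/\partial_q x_\mu^2 - \partial_q^2/\partial_q x_\nu^2$; the two differ by a term $\partial_q^2/\partial_q x_\mu^2$. Likewise your cross group collapses to $\left(\mathrm{e}_\mu \mathrm{e}_\nu + \mathrm{e}_\nu \mathrm{e}_\mu\right)\partial_q^2/(\partial_q x_\mu\, \partial_q x_\nu) = -2\delta_{\mu\nu}\,\partial_q^2/(\partial_q x_\mu\, \partial_q x_\nu)$, which vanishes for $\mu \neq \nu$, whereas the right-hand side of \eqref{e-nu} is $\delta_{\mu\alpha}\,\partial_q^2/(\partial_q x_\mu\, \partial_q x_\alpha) = \partial_q^2/\partial_q x_\mu^2$, which does not vanish. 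So \eqref{e-mu} and \eqref{e-nu} are individually incompatible with the Clifford axioms you posit; only their \emph{sum} is consistent with them (both sides then total $-\partial_q^2/\partial_q x_\mu^2 - \partial_q^2/\partial_q x_\nu^2$). No amount of care with the $q$-Clairaut step --- which you rightly flag as the analytic subtlety in your framework --- can repair this, because the mismatch is an algebraic discrepancy in coefficients, not an analytic one.

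For comparison, the paper never derives \eqref{e-mu} and \eqref{e-nu}: it runs the implication in the opposite direction, taking the two relations as given properties of the algebra generated by the $\mathrm{e}$'s and the $q$-derivatives, substituting them into the four-term expansion of $D_q^2$, and concluding $D_q^2 = -\partial_q^2/\partial_q x_\mu^2 - \partial_q^2/\partial_q x_\nu^2$ in the case $\alpha = \mu$. In other words, the relations function as postulates calibrated so that their sum produces the desired square of $D_q$, not as consequences of orthonormality or of any standard Clifford structure. Your plan of deriving them from first principles is the natural thing to attempt, but the proposition as stated cannot be proved that way; a correct write-up would either have to adopt the paper's reading (assume \eqref{e-mu}--\eqref{e-nu} and deduce the formula for $D_q^2$), or replace \eqref{e-mu}--\eqref{e-nu} by the Clifford-compatible identities your computation actually establishes.
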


\begin{proof}
The proof relies on the observation that the square of~\eqref{q-D-e} is equivalent to $-\frac{\partial_{q}^{2}}{\partial_{q}x_{\mu}^{2}} - \frac{\partial_{q}^{2}}{\partial_{q}x_{\nu}^{2}}$. Computing $D_{q}^{2}$ gives
\begin{align}
\label{D2q}
D_{q}^{2} &= \mathrm{e}_{\mu}^{2} \frac{\partial_{q}^{2}}{\partial_{q}x_{\nu}^{2}} + \mathrm{e}_{\nu}^{2} \frac{\partial_{q}^{2}}{\partial_{q}x_{\mu}^{2}} + \mathrm{e}_{\mu} \frac{\partial_{q}}{\partial_{q}x_{\nu}} \mathrm{e}_{\nu} \frac{\partial_{q}}{\partial_{q}x_{\mu}} + \mathrm{e}_{\nu} \frac{\partial_{q}}{\partial_{q}x_{\mu}} \mathrm{e}_{\mu} \frac{\partial_{q}}{\partial_{q}x_{\nu}}.
\end{align}
Substituting~\eqref{e-mu} and~\eqref{e-nu} into~\eqref{D2q} yields
\begin{align}
\label{D2q1}
D_{q}^{2} &= -2\delta_{\mu\alpha} \frac{\partial_{q}^{2}}{\partial_{q}x_{\mu} \partial_{q}x_{\alpha}} - \frac{\partial_{q}^{2}}{\partial_{q}x_{\nu}^{2}} + \delta_{\mu\alpha} \frac{\partial_{q}^{2}}{\partial_{q}x_{\mu} \partial_{q}x_{\alpha}}.
\end{align}
For $\mu = \alpha$, we obtain the claimed identity
\[
D_{q}^{2} = -\frac{\partial_{q}^{2}}{\partial_{q}x_{\mu}^{2}} - \frac{\partial_{q}^{2}}{\partial_{q}x_{\nu}^{2}},
\]
which completes the proof.
\end{proof}

According to the above result, this operator acts on $q$-spinor functions of the form $\Psi(u^{\alpha}_{\dot{\beta}}(x_{\mu}, x_{\nu}))$. Therefore, the operator~\eqref{q-D-e} can be rewritten as
\begin{equation}\label{q-D-e1}
D_{q} \Psi(u^{\alpha}_{\dot{\beta}}(x_{\mu}, x_{\nu})) = \mathrm{e}_{\nu} \frac{\partial_{q} \Psi(u^{\alpha}_{\dot{\beta}})}{\partial_{q}x_{\mu}} + \mathrm{e}_{\mu} \frac{\partial_{q} \Psi(u^{\alpha}_{\dot{\beta}})}{\partial_{q}x_{\nu}},
\end{equation}
where the $q$-derivatives are computed using the chain rule~\eqref{chain} for $q$-spinor variables.

\begin{example}
Let $\Psi(x_{\mu}, x_{\nu}) = \exp(ix_{\mu})$ and define $u^{1}_{\dot{2}} = q^{2}x_{\mu}$. Then, $\Psi(u^{1}_{\dot{2}}) = \exp(iu^{1}_{\dot{2}})$. Using~\eqref{chain}, we compute:
\begin{align*}
\frac{\partial_{q} \Psi}{\partial_{q} x_{\mu}} &= \frac{\partial_{q}}{\partial_{q} u^{1}_{\dot{2}}} (\exp(iu^{1}_{\dot{2}})) \cdot \frac{\partial_{q}}{\partial_{q} x_{\mu}} (q^{2}x_{\mu}) \\
&= q^{2} \cdot \frac{\exp(iq u^{1}_{\dot{2}}) - q \exp(i u^{1}_{\dot{2}})}{(q u)^{1}_{\dot{2}} - q u^{1}_{\dot{2}}} \cdot \frac{\partial_{q} x_{\mu}}{\partial_{q} x_{\mu}} \\
&= q^{2} \cdot \frac{\exp(iq u^{1}_{\dot{2}}) - q \exp(iq u^{1}_{\dot{2}})}{(q u)^{1}_{\dot{2}} - q u^{1}_{\dot{2}}},
\end{align*}
and applying~\eqref{q-D-e}, we obtain:
\[
D_{q} \Psi(u^{1}_{\dot{2}}) = \mathrm{e}_{\nu} q^{2} \cdot \frac{\exp(iq u^{1}_{\dot{2}}) - q \exp(iq u^{1}_{\dot{2}})}{(q u)^{1}_{\dot{2}} - q u^{1}_{\dot{2}}}.
\]
\end{example}

The expression~\eqref{q-D-e} also applies to functions that depend directly on the variables $x_{\mu}$ and $x_{\nu}$, without requiring the spinor composition. For this case, we define the $q$-derivatives in terms of the $q$-spinor variables $x^{\alpha}$ and $x_{\dot{\beta}}$ as follows:

\begin{definition}
Let $\psi(x_{\mu}, x_{\nu})$ be a scalar function. The $q$-derivatives with respect to $x_{\mu}$ and $x_{\nu}$ are defined by
\begin{align}
\label{n1}
\frac{\partial_{q} \psi}{\partial_{q} x_{\mu}} &= \frac{\psi(x_{\mu} + q\, \mathrm{e}_{\mu} x^{\alpha}) - \psi(x_{\mu})}{x^{\alpha}}, \\
\label{n2}
\frac{\partial_{q} \psi}{\partial_{q} x_{\nu}} &= \frac{\psi(x_{\nu} + q\, \mathrm{e}_{\nu} x_{\dot{\beta}}) - \psi(x_{\nu})}{x_{\dot{\beta}}},
\end{align}
where $x^{\alpha}$ and $x_{\dot{\beta}}$ are $q$-spinor variables, and $\mathrm{e}_{\mu}, \mathrm{e}_{\nu}$ belong to the orthonormal basis of $\mathds{R}^{n}$.
\end{definition}

\begin{remark}
In view of the above, we conclude that the operator~\eqref{q-D-e} not only acts on spinor-composed functions $\Psi(u^{\alpha}_{\dot{\beta}}(x_{\mu}))$, but also on scalar functions $\psi(x_{\mu}, x_{\nu})$. Hence, it inherently depends on the variables $x_{\mu}$ and $x_{\nu}$.
\end{remark}

\begin{example}
Consider the scalar function $\psi(x_{\mu}, x_{\nu}) = q x_{\nu} x_{\dot{\beta}}$, with $\dot{\beta} = \dot{2}$. Applying~\eqref{n1} and~\eqref{n2}, we obtain:
\begin{align*}
\frac{\partial_{q} \psi}{\partial_{q} x_{\mu}} &= 0, \\
\frac{\partial_{q} \psi}{\partial_{q} x_{\nu}} &= \frac{q(x_{\nu} + q\, \mathrm{e}_{\nu} x_{\dot{2}}) x_{\dot{2}} - q x_{\nu} x_{\dot{2}}}{x_{\dot{2}}} = q^{2} \mathrm{e}_{\nu} x_{\dot{2}}.
\end{align*}
Thus, the operator~\eqref{q-D-e} yields:
\[
D_{q} \psi = q^{2} x_{\dot{2}}\, \mathrm{e}_{\mu} \mathrm{e}_{\nu}.
\]
\end{example}

\subsection{The $q$-Dirac differential operator}

\begin{definition}
The $q$-analogue of the Dirac operator is defined by
\begin{equation}\label{Dirac-q}
D_{\mu}^{q} = \gamma_{\mu} \frac{\partial^{q}}{\partial^{q} x_{\mu}}.
\end{equation}
\end{definition}

We are now ready to state our main results in the following propositions.

\subsection{The $q$-differential operators for $q$-spinor variables}

\begin{proposition}
Let $\Psi$ be a function of the $q$-spinor variables. The operator \rm(\ref{q-D-e}) for $q$-spinor variables $D^{q}\Psi$ is given by
\begin{equation}\label{q-spin-var}
D^{q}\Psi  = \frac{\partial^{q}\Psi }{\partial^{q}u^{\alpha}_{\dot{\beta}}}D^{q}u^{\alpha}_{\dot{\beta}}.
\end{equation}
\end{proposition}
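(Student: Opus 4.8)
The plan is to obtain \eqref{q-spin-var} by combining the explicit action of $D_{q}$ recorded in \eqref{q-D-e1} with the $q$-spinor chain rule \eqref{chain}. First I would write the operator applied to $\Psi$ in its expanded form
\[
D_{q}\Psi = \mathrm{e}_{\nu}\frac{\partial_{q}\Psi}{\partial_{q}x_{\mu}} + \mathrm{e}_{\mu}\frac{\partial_{q}\Psi}{\partial_{q}x_{\nu}},
\]
which is precisely \eqref{q-D-e1}. Since each of the two $q$-partial derivatives on the right is a $q$-derivative of the composite $\Psi(u^{\alpha}_{\dot{\beta}}(x_{\mu},x_{\nu}))$, the chain rule \eqref{chain} applies term by term, giving
\[
\frac{\partial_{q}\Psi}{\partial_{q}x_{\mu}} = \frac{\partial^{q}\Psi}{\partial^{q}u^{\alpha}_{\dot{\beta}}}\,\frac{\partial^{q}u^{\alpha}_{\dot{\beta}}}{\partial^{q}x_{\mu}},
\qquad
\frac{\partial_{q}\Psi}{\partial_{q}x_{\nu}} = \frac{\partial^{q}\Psi}{\partial^{q}u^{\alpha}_{\dot{\beta}}}\,\frac{\partial^{q}u^{\alpha}_{\dot{\beta}}}{\partial^{q}x_{\nu}}.
\]

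Next I would substitute these two identities back into the expanded operator and collect the common factor $\partial^{q}\Psi/\partial^{q}u^{\alpha}_{\dot{\beta}}$ on the left, so as to reach
\[
D_{q}\Psi = \frac{\partial^{q}\Psi}{\partial^{q}u^{\alpha}_{\dot{\beta}}}\left(\mathrm{e}_{\nu}\frac{\partial^{q}u^{\alpha}_{\dot{\beta}}}{\partial^{q}x_{\mu}} + \mathrm{e}_{\mu}\frac{\partial^{q}u^{\alpha}_{\dot{\beta}}}{\partial^{q}x_{\nu}}\right).
\]
The bracketed expression is exactly the definition \eqref{q-D-e} of $D_{q}$ evaluated on the single spinor variable $u^{\alpha}_{\dot{\beta}}$, that is $D^{q}u^{\alpha}_{\dot{\beta}}$. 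Recognizing this term then yields the asserted identity \eqref{q-spin-var} immediately.

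The step I expect to be the main obstacle is the collection of the common factor, because in the substituted expression the basis vectors $\mathrm{e}_{\mu},\mathrm{e}_{\nu}$ stand to the left of the derivative $\partial^{q}\Psi/\partial^{q}u^{\alpha}_{\dot{\beta}}$, whereas \eqref{q-spin-var} places that derivative on the far left. Commuting it past $\mathrm{e}_{\mu}$ and $\mathrm{e}_{\nu}$ is justified only if $\partial^{q}\Psi/\partial^{q}u^{\alpha}_{\dot{\beta}}$ is scalar-valued, i.e.\ commutes with the Clifford generators of $\mathds{R}^{n}$. I would therefore make this point explicit: the chain-rule factor is a scalar $q$-difference quotient in the single variable $u^{\alpha}_{\dot{\beta}}$ and hence commutes with every $\mathrm{e}_{i}$, which legitimizes the factoring. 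Should one wish to permit Clifford-valued derivatives, the left-to-right ordering built into \eqref{q-spin-var} would instead have to be adopted as a convention and stated as part of the hypothesis.
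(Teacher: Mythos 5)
Your proposal is correct and follows essentially the same route as the paper: the paper likewise applies the chain rule to each of the two terms $\mathrm{e}_{\nu}\,\partial^{q}\Psi/\partial^{q}x_{\mu}$ and $\mathrm{e}_{\mu}\,\partial^{q}\Psi/\partial^{q}x_{\nu}$, commutes the basis vectors past $\partial^{q}\Psi/\partial^{q}u^{\alpha}_{\dot{\beta}}$, adds, and identifies the bracketed sum with $D^{q}u^{\alpha}_{\dot{\beta}}$ via \eqref{q-D-e}. The only difference is that you flag explicitly the commutation of $\partial^{q}\Psi/\partial^{q}u^{\alpha}_{\dot{\beta}}$ with the Clifford generators as the step needing justification, whereas the paper performs this interchange silently (recording only the related commutation \eqref{relation-e} as a remark afterwards) --- a point in your favor, not a divergence in method.
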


\begin{proof}
Let us consider the expressions (\ref{q-D-e}) and
\begin{equation}\label{Chr}
\frac{\partial^{q}\Psi }{\partial^{q}x}=\frac{\partial^{q}\Psi }{\partial^{q}u^{\alpha}_{\dot{\beta}}}\frac{\partial^{q}u^{\alpha}_{\dot{\beta}}}{\partial_{q}x_{\nu}}.
\end{equation}

Multiplying the left-hand side by 
$\mathrm{e}_{\nu}$ in (\ref{chain2}), we obtain
\begin{align}\label{chain3a}
\begin{split}
\mathrm{e}_{\nu}\frac{\partial^{q}\Psi }{\partial^{q}x_{\mu}} & = \ \mathrm{e}_{\nu}\frac{\partial^{q}\Psi }{\partial^{q}u^{\alpha}_{\dot{\beta}}}\frac{\partial^{q}u^{\alpha}_{\dot{\beta}}}{\partial_{q}x_{\mu}}\\
& = \ \frac{\partial^{q}\Psi }{\partial^{q}u^{\alpha}_{\dot{\beta}}}\mathrm{e}_{\nu}\frac{\partial^{q}u^{\alpha}_{\dot{\beta}}}{\partial_{q}x_{\mu}},
\end{split}
\end{align}
and multiplying the left-hand side by 
$\mathrm{e}_{\mu}$ in (\ref{Chr}) we get
\begin{align}\label{chain4a}
\begin{split}
\mathrm{e}_{\mu}\frac{\partial^{q}\Psi }{\partial^{q}x_{\nu}} & = \ \mathrm{e}_{\mu}\frac{\partial^{q}\Psi }{\partial^{q}u^{\alpha}_{\dot{\beta}}}\frac{\partial^{q}u^{\alpha}_{\dot{\beta}}}{\partial_{q}x_{\nu}}\\
& = \ \frac{\partial^{q}\Psi }{\partial^{q}u^{\alpha}_{\dot{\beta}}}\mathrm{e}_{\mu}\frac{\partial^{q}u^{\alpha}_{\dot{\beta}}}{\partial_{q}x_{\nu}},
\end{split}
\end{align}
Adding (\ref{chain3a}) and (\ref{chain4a}) and considering (\ref{q-D-e}), we finally obtain
\begin{equation*}
D^{q}\Psi =\frac{\partial^{q}\Psi }{\partial^{q}u^{\alpha}_{\dot{\beta}}}D^{q}u^{\alpha}_{\dot{\beta}},
\end{equation*}
which proves our assertion.
\end{proof}

\begin{remark}
The above proof implies the following relation:
\begin{equation}\label{relation-e}
\mathrm{e}_{\mu}\frac{\partial^{q}u^{\alpha}_{\dot{\beta}}}{\partial_{q}x_{\nu}}-\frac{\partial^{q}u^{\alpha}_{\dot{\beta}}}{\partial_{q}x_{\nu}}\mathrm{e}_{\mu} = 0.
\end{equation}
\end{remark}

\begin{remark}\label{rem3}
If $\Psi$ also depends on $x_{\nu}$, then
\begin{equation}\label{Dq2} 
D^{q}\Psi  =  \frac{\partial_{q}\Psi}{\partial^{q}x_{\nu}}D^{q}x_{\nu}.
\end{equation}
\end{remark}

\begin{remark}\label{rem4}
The $q$-differential for the coordinate $x_{\mu}$ is given by
\begin{equation}\label{differentialx-e}
D^{q}x_{\nu}:=\mathrm{e}_{\mu}\mathrm{d}^{q}x_{\nu},
\end{equation}
consequently, the $q$-differential for $u^{\alpha}_{\dot{\beta}}$ is defined as
\begin{equation}\label{differentialx-q-e}
D^{q}u^{\alpha}_{\dot{\beta}}:=\frac{\partial^{q}u^{\alpha}_{\dot{\beta}}}{\partial^{q}x_{\nu}}D^{q}x_{\nu}.
\end{equation}
\end{remark}

\begin{proposition}
Let $\Psi$ be a function of the $q$-spinor variables. The Dirac operator for $q$-spinor variables $D_{\mu}^{q}\Psi$ is given by
\begin{equation}\label{Dirac-spin-var}
D^{q}_{\mu}\Psi  = \frac{\partial^{q}\Psi }{\partial^{q}u^{\alpha}_{\dot{\beta}}}D^{q}_{\mu}u^{\alpha}_{\dot{\beta}}.
\end{equation}
\end{proposition}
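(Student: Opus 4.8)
The plan is to mirror, in the Dirac setting, the argument that established \eqref{q-spin-var}, now working with the single-term operator \eqref{Dirac-q} rather than the two-term sum \eqref{q-D-e}. First I would unfold the definition of the $q$-Dirac operator, writing
\[
D^{q}_{\mu}\Psi = \gamma_{\mu}\,\frac{\partial^{q}\Psi}{\partial^{q}x_{\mu}}.
\]
Then I would insert the $q$-spinor chain rule \eqref{chain} into the scalar factor $\partial^{q}\Psi/\partial^{q}x_{\mu}$, obtaining
\[
D^{q}_{\mu}\Psi = \gamma_{\mu}\,\frac{\partial^{q}\Psi}{\partial^{q}u^{\alpha}_{\dot{\beta}}}\,\frac{\partial^{q}u^{\alpha}_{\dot{\beta}}}{\partial_{q}x_{\mu}}.
\]

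The second step is the commutation move. Exactly as in the passage from the first to the second line of \eqref{chain3a}, I would pull the matrix $\gamma_{\mu}$ through the scalar coefficient $\partial^{q}\Psi/\partial^{q}u^{\alpha}_{\dot{\beta}}$, giving
\[
D^{q}_{\mu}\Psi = \frac{\partial^{q}\Psi}{\partial^{q}u^{\alpha}_{\dot{\beta}}}\,\gamma_{\mu}\,\frac{\partial^{q}u^{\alpha}_{\dot{\beta}}}{\partial_{q}x_{\mu}}.
\]
Finally, recognizing via the definition \eqref{Dirac-q} applied to $u^{\alpha}_{\dot{\beta}}$ in place of $\Psi$ that $\gamma_{\mu}\,\partial^{q}u^{\alpha}_{\dot{\beta}}/\partial^{q}x_{\mu} = D^{q}_{\mu}u^{\alpha}_{\dot{\beta}}$, I would read off the claimed identity \eqref{Dirac-spin-var}.

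The delicate point is the commutation step: it presupposes that $\gamma_{\mu}$ commutes with the spinor-variable derivative $\partial^{q}\Psi/\partial^{q}u^{\alpha}_{\dot{\beta}}$, that is, the analogue of \eqref{relation-e} with $\mathrm{e}_{\mu}$ replaced by $\gamma_{\mu}$, namely $\gamma_{\mu}\,\partial^{q}u^{\alpha}_{\dot{\beta}}/\partial_{q}x_{\mu} - (\partial^{q}u^{\alpha}_{\dot{\beta}}/\partial_{q}x_{\mu})\,\gamma_{\mu} = 0$. Since the Dirac matrices $\gamma_{\mu}$ need not commute with one another, I expect justifying this to be the main obstacle; the move is legitimate precisely because the derivative coefficient $\partial^{q}\Psi/\partial^{q}u^{\alpha}_{\dot{\beta}}$ is treated as a scalar, so that $\gamma_{\mu}$ may be transported freely past it. I would therefore record this commutation as an explicit remark, parallel to the one following the proof of \eqref{q-spin-var}, completing the argument.
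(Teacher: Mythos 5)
Your proposal is correct and is essentially the paper's own proof: the paper likewise combines the chain rule \eqref{chain2} with left multiplication by $\gamma_{\mu}$, commutes $\gamma_{\mu}$ past the scalar derivative, and invokes the definition \eqref{Dirac-q} to identify $\gamma_{\mu}\,\partial^{q}u^{\alpha}_{\dot{\beta}}/\partial_{q}x_{\mu}$ with $D^{q}_{\mu}u^{\alpha}_{\dot{\beta}}$. The commutation relation you flag as the delicate point is precisely what the paper records afterwards in Remark \ref{rem1}, equation \eqref{relation}.
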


\begin{proof}
Multiplying the left-hand side by 
$\gamma_{\mu}$ in (\ref{chain2}), we obtain
\begin{align}\label{chain3}
\begin{split}
\gamma_{\mu}\frac{\partial^{q}\Psi }{\partial^{q}x_{\mu}} & = \ \gamma_{\mu}\frac{\partial^{q}\Psi }{\partial^{q}u^{\alpha}_{\dot{\beta}}}\frac{\partial^{q}u^{\alpha}_{\dot{\beta}}}{\partial_{q}x_{\mu}}\\
& = \ \frac{\partial^{q}\Psi }{\partial^{q}u^{\alpha}_{\dot{\beta}}}\gamma_{\mu}\frac{\partial^{q}u^{\alpha}_{\dot{\beta}}}{\partial_{q}x_{\mu}},
\end{split}
\end{align}
and considering (\ref{Dirac-q}), we finally get
\begin{equation*}
D_{\mu}^{q}\Psi  = \frac{\partial^{q}\Psi }{\partial^{q}u^{\alpha}_{\dot{\beta}}}D^{q}_{\mu}u^{\alpha}_{\dot{\beta}},
\end{equation*}
which is our claim.
\end{proof}

\begin{remark}\label{rem1}
The above proof implies the following relation:
\begin{equation}\label{relation}
\gamma_{\mu}\frac{\partial^{q}u^{\alpha}_{\dot{\beta}}}{\partial_{q}x_{\mu}}-\frac{\partial^{q}u^{\alpha}_{\dot{\beta}}}{\partial_{q}x_{\mu}}\gamma_{\mu} = 0.
\end{equation}
\end{remark}

\begin{remark}\label{rem2}
The Dirac $q$-differential for the coordinate $x_{\mu}$ is given by
\begin{equation}\label{differentialx}
D^{q}_{\mu}x:=\gamma_{\mu}\mathrm{d}^{q}x^{\mu},
\end{equation}
consequently, the Dirac $q$-differential for $u^{\alpha}_{\dot{\beta}}$ is defined as
\begin{equation}\label{differentialx-q}
D^{q}_{\mu}u^{\alpha}_{\dot{\beta}}:=\frac{\partial^{q}u^{\alpha}_{\dot{\beta}}}{\partial^{q}x^{\mu}}D^{q}_{\mu}x^{\mu}.
\end{equation}
\end{remark}
\subsection{ The  integral formulas in $q$-spinor variables}

\begin{proposition}\label{integral-1}
Let $\Psi (u^{\alpha}_{\dot{\beta}}(x_{\mu}))$ be a $q$-spinor function, and 
let  $\Gamma_{q}$ be the closed contour of the deformed quantum complex plane, and $x_{0}\in \Gamma_{q}$.  The integral formulas of the $q$-spinor variables  are given by
\begin{align}
\label{Int-Dirac1}\oint_{\Gamma_{q}}\frac{\Psi ((qu)^{\alpha}_{\dot{\beta}}(x_{\mu}))D^{q}_{\mu}u^{\alpha}_{\dot{\beta}}}{(qu)^{\alpha}_{\dot{\beta}}(x_{\mu})-qu^{\alpha}_{\dot{\beta}}(x_{0})} & = \ \sum\limits_{n=0}^{\infty}\left[\gamma_{\mu}\Psi (qu^{\alpha}_{\dot{\beta}}(x_{0}))\right]^{n}, \\
\label{int-Dirac2}\oint_{\Gamma_{q}}\frac{\Psi (u^{\alpha}_{\dot{\beta}}(x_{\mu}))D^{q}_{\mu}u^{\alpha}_{\dot{\beta}}}{qu^{\alpha}_{\dot{\beta}}(x_{\mu})-(qu)^{\alpha}_{\dot{\beta}}(x_{0})} & = \ \frac{1}{q}\sum\limits_{n=0}^{\infty}\left[\gamma_{\mu}\Psi ((qu)^{\alpha}_{\dot{\beta}}(x_{0}))\right]^{n},
\end{align} 
where $\gamma_{\mu}$ are the $q$-deformed Dirac matrices defined in the reference \rm\cite{Schmidt1}.
\end{proposition}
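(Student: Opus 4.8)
The plan is to treat both identities as $q$-deformed Cauchy integral formulas, in which the contour integral over $\Gamma_{q}$ reproduces the value of the integrand at the distinguished point $x_{0}$ in the form of a geometric power series; the central device is the expansion of the Cauchy kernel as a $q$-geometric series followed by term-by-term integration over $\Gamma_{q}$. First I would rewrite the kernel in~\eqref{Int-Dirac1} by factoring $(qu)^{\alpha}_{\dot{\beta}}(x_{\mu})$ out of the denominator,
\begin{equation*}
\frac{1}{(qu)^{\alpha}_{\dot{\beta}}(x_{\mu})-qu^{\alpha}_{\dot{\beta}}(x_{0})} = \frac{1}{(qu)^{\alpha}_{\dot{\beta}}(x_{\mu})}\sum_{n=0}^{\infty}\left(\frac{qu^{\alpha}_{\dot{\beta}}(x_{0})}{(qu)^{\alpha}_{\dot{\beta}}(x_{\mu})}\right)^{n},
\end{equation*}
which converges on $\Gamma_{q}$ under the assumption $|qu^{\alpha}_{\dot{\beta}}(x_{0})| < |(qu)^{\alpha}_{\dot{\beta}}(x_{\mu})|$ for $x_{\mu}\in\Gamma_{q}$.

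Next I would replace the Dirac $q$-differential $D^{q}_{\mu} u^{\alpha}_{\dot{\beta}}$ by its explicit form from Remark~\ref{rem2}, namely $D^{q}_{\mu} u^{\alpha}_{\dot{\beta}} = \gamma_{\mu} \tfrac{\partial^{q} u^{\alpha}_{\dot{\beta}}}{\partial^{q} x^{\mu}}\,\mathrm{d}^{q} x^{\mu}$, and invoke the commutation relation~\eqref{relation} of Remark~\ref{rem1} to move $\gamma_{\mu}$ freely past the $q$-derivative $\partial^{q} u^{\alpha}_{\dot{\beta}}/\partial^{q} x^{\mu}$. This is precisely what allows the matrix factor $\gamma_{\mu}$ and the spinor value $\Psi$ to be grouped together as the single quantity $\gamma_{\mu}\Psi$ that appears in the powers on the right-hand side.

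Substituting the series and integrating term by term, I would then evaluate each $q$-contour integral by a $q$-residue argument localised at $x_{0}$. The mechanism I expect is that the $n$-th term of the expansion contributes exactly one factor $\gamma_{\mu}\Psi(qu^{\alpha}_{\dot{\beta}}(x_{0}))$, so that the summation collapses to $\sum_{n=0}^{\infty}[\gamma_{\mu}\Psi(qu^{\alpha}_{\dot{\beta}}(x_{0}))]^{n}$ and establishes~\eqref{Int-Dirac1}. For~\eqref{int-Dirac2} the same computation applies after interchanging the roles of the $q$-shifted and unshifted arguments in the kernel; the asymmetry of the $q$-shift then produces the overall prefactor $1/q$ and evaluates $\Psi$ at $(qu)^{\alpha}_{\dot{\beta}}(x_{0})$.

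The main obstacle will be establishing the $q$-residue theorem on the deformed contour $\Gamma_{q}$ in a form compatible with matrix-valued, noncommuting integrands. Two points require care: that the term-by-term integration is legitimate, i.e.\ that the $q$-geometric series converges uniformly on $\Gamma_{q}$, and, more delicately, that the ordering of the noncommuting factors $\gamma_{\mu}$ and $\Psi$ is preserved throughout, so that each residue yields a clean power $[\gamma_{\mu}\Psi]^{n}$ rather than an unordered product of $n$ factors. The commutation relation~\eqref{relation} is the key structural input that keeps this ordering consistent.
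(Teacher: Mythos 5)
Your proposal diverges from the paper's proof in an essential way, and the divergence is exactly where the gap lies. The mechanism you invoke --- expand the Cauchy kernel as a $q$-geometric series, integrate term by term, and evaluate each term by a $q$-residue localised at $x_{0}$ --- cannot produce the right-hand sides of \eqref{Int-Dirac1} and \eqref{int-Dirac2}. In an expansion of the form
$\frac{1}{(qu)^{\alpha}_{\dot{\beta}}(x_{\mu})-qu^{\alpha}_{\dot{\beta}}(x_{0})}
=\frac{1}{(qu)^{\alpha}_{\dot{\beta}}(x_{\mu})}\sum_{n\geq 0}\bigl(qu^{\alpha}_{\dot{\beta}}(x_{0})/(qu)^{\alpha}_{\dot{\beta}}(x_{\mu})\bigr)^{n}$,
the $n$-th term carries the $n$-th power of the \emph{evaluation point} $qu^{\alpha}_{\dot{\beta}}(x_{0})$, so term-by-term integration produces moment/Taylor-type coefficients of $\Psi$ against powers of the variable --- never the $n$-th power of the \emph{function value}, $\bigl[\gamma_{\mu}\Psi(qu^{\alpha}_{\dot{\beta}}(x_{0}))\bigr]^{n}$. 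A geometric series in the value of $\Psi$ at $x_{0}$ is not something residue calculus can generate; your sentence ``the mechanism I expect is that the $n$-th term of the expansion contributes exactly one factor $\gamma_{\mu}\Psi$'' is precisely the step that would need proof, and it fails. The commutation relation \eqref{relation} only lets you reorder $\gamma_{\mu}$ with $\partial^{q}u^{\alpha}_{\dot{\beta}}/\partial_{q}x_{\mu}$; it does nothing to couple the kernel expansion to powers of $\Psi$. You also smuggle in a hypothesis ($|qu^{\alpha}_{\dot{\beta}}(x_{0})|<|(qu)^{\alpha}_{\dot{\beta}}(x_{\mu})|$ on $\Gamma_{q}$) that appears nowhere in the proposition.

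The paper's route is different and avoids residues altogether. It uses \eqref{Dirac-spin-var} to write $D^{q}_{\mu}\Psi$ as a $q$-difference quotient, so that $D^{q}_{\mu}\Psi$ splits into exactly the two integrands of \eqref{Int-Dirac1} and \eqref{int-Dirac2} (with a common denominator); it integrates this identity over $\Gamma_{q}$, and then evaluates $\oint_{\Gamma_{q}}D^{q}_{\mu}\Psi$ by importing Theorem 2.9 of \cite{Jaramillo2023}, with the $q$-deformed Pauli matrices of that reference replaced by the $q$-deformed Dirac matrices $\gamma_{\mu}$. That imported theorem is the sole source of the geometric series $\sum_{n}\left[\gamma_{\mu}\Psi(\,\cdot\,(x_{0}))\right]^{n}$; the two stated formulas then follow by equating, on both sides, the terms depending on $\Psi((qu)^{\alpha}_{\dot{\beta}}(x_{\mu}))$ and on $\Psi(u^{\alpha}_{\dot{\beta}}(x_{\mu}))$ separately. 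In particular the prefactor $1/q$ in \eqref{int-Dirac2} comes from dividing out the explicit coefficient $q$ multiplying $\Psi(u^{\alpha}_{\dot{\beta}}(x_{\mu}))$ in the difference quotient, not from any ``asymmetry of the $q$-shift.'' If you want a self-contained proof along your lines, you would first have to establish the analogue of that external theorem --- that is, prove why $\oint_{\Gamma_{q}}D^{q}_{\mu}\Psi$ equals a difference of such geometric series --- which your residue heuristic does not provide.
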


\begin{proof}
The proof of this result follows from the approach developed in \cite{Jaramillo2023}, consequently, expression~(\ref{Dirac-spin-var}) takes the form
\begin{equation}\label{eq1}
D^{q}_{\mu} \Psi  = \frac{\Psi ((qu)^{\alpha}_{\dot{\beta}}(x_{\mu}))D^{q}_{\mu}u^{\alpha}_{\dot{\beta}}}{(qu)^{\alpha}_{\dot{\beta}}-qu^{\alpha}_{\dot{\beta}}}-\frac{q\Psi (u^{\alpha}_{\dot{\beta}}(x_{\mu}))D^{q}_{\mu}u^{\alpha}_{\dot{\beta}}}{(qu)^{\alpha}_{\dot{\beta}}-qu^{\alpha}_{\dot{\beta}}},
\end{equation}
now, we integrate over the closed contour $\Gamma_{q}$  and we take $x_{0}\in\Gamma_{q}$ to obtain
\begin{equation}
\oint_{\Gamma_{q}}D^{q}_{\mu} \Psi =\oint_{\Gamma_{q}} \frac{\Psi ((qu)^{\alpha}_{\dot{\beta}}(x_{\mu}))D^{q}_{\mu}u^{\alpha}_{\dot{\beta}}}{(qu)^{\alpha}_{\dot{\beta}}(x_{\mu})-qu^{\alpha}_{\dot{\beta}}(x_{0})}-\oint_{\Gamma_{q}}\frac{q\Psi (u^{\alpha}_{\dot{\beta}}(x_{\mu}))D^{q}_{\mu}u^{\alpha}_{\dot{\beta}}}{(qu)^{\alpha}_{\dot{\beta}}(x_{\mu})-qu^{\alpha}_{\dot{\beta}}(x_{0})}.
\end{equation}
Hence, to solve the  integral $\oint_{\Gamma_{q}}D^{q}_{\mu}\Psi$, we will use similary the proof of the Theorem 2.9 of the reference \cite{Jaramillo2023}, interchanging the  Pauli matrices of $q$-deformed Minkowski space  by the $q$-deformed Dirac matrices (e.g. \cite{Schmidt1}) , obtaining
\begin{align}
\begin{split}
\sum\limits_{n=0}^{\infty}\left[\gamma_{\mu}\Psi ((qu)^{\alpha}_{\dot{\beta}}(x_{0}))\right]^{n}-\sum\limits_{n=0}^{\infty}\left[\gamma_{\mu}\Psi (u^{\alpha}_{\dot{\beta}}(x_{0}))\right]^{n} & = \\ \oint_{\Gamma_{q}} \frac{\Psi ((qu)^{\alpha}_{\dot{\beta}}(x_{\mu}))D^{q}_{\mu}u^{\alpha}_{\dot{\beta}}}{(qu)^{\alpha}_{\dot{\beta}}(x_{\mu})-qu^{\alpha}_{\dot{\beta}}(x_{0})}-\oint_{\Gamma_{q}}\frac{q\Psi (u^{\alpha}_{\dot{\beta}}(x_{\mu}))D^{q}_{\mu}u^{\alpha}_{\dot{\beta}}}{(qu)^{\alpha}_{\dot{\beta}}(x_{\mu})-qu^{\alpha}_{\dot{\beta}}(x_{0})},
\end{split}
\end{align}
and  finally,  equalating terms that depend on $\Psi ((qu)^{\alpha}_{\dot{\beta}}(x_{\mu})$ and $\Psi (u^{\alpha}_{\dot{\beta}}(x_{\mu}))$ we obtain (\ref{Int-Dirac1}) and (\ref{int-Dirac2}), and the proof is complete.
\end{proof}

We will  mention an important consequence of the Proposition \ref{integral-1}  starting of Eqs. (\ref{Int-Dirac1}) and (\ref{int-Dirac2}): the  formulation of the  integral $\oint_{\Gamma_{q}}\Psi (u^{\alpha}_{\dot{\beta}}(x))d^{q}x_{\mu}$, which we will mention  in the following theorem. 

\begin{theorem}
Let $\Gamma_{q}$ be a closed contour and suppose that $x_{0}\in\Gamma_{q}$. Then for a function on $q-$ spinor variables $\Psi (u^{\alpha}_{\dot{\beta}}(x))$ the $q$-spinor integral formula is given by
 
\begin{align}\label{integral-formula}
\frac{1}{qb}\left\lbrace \sum\limits_{n=0}^{\infty}\left[\gamma_{\mu}\Psi (u^{\alpha}_{\dot{\beta}}(x_{0}))\right]^{n}  \right\rbrace  = 
\oint_{\Gamma_{q}}\Psi (u^{\alpha}_{\dot{\beta}})\mathrm{d}^{q}x_{\mu}, \quad b\neq 0.
\end{align}
\end{theorem}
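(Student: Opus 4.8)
The plan is to obtain \eqref{integral-formula} as a direct consequence of Proposition~\ref{integral-1}, the only genuinely new ingredient being the conversion of the spinorial measure $D^{q}_{\mu}u^{\alpha}_{\dot{\beta}}$ appearing in \eqref{Int-Dirac1}--\eqref{int-Dirac2} into the scalar $q$-differential $\mathrm{d}^{q}x_{\mu}$. I would take \eqref{int-Dirac2} as the starting point, since its prefactor $\tfrac{1}{q}$ already matches the $\tfrac{1}{qb}$ sought in \eqref{integral-formula}; the companion identity \eqref{Int-Dirac1} yields a parallel relation and serves only as a consistency check.

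First I would invoke Remark~\ref{rem2}: combining \eqref{differentialx} with \eqref{differentialx-q} gives
\[
D^{q}_{\mu}u^{\alpha}_{\dot{\beta}} = \frac{\partial^{q}u^{\alpha}_{\dot{\beta}}}{\partial^{q}x^{\mu}}\,\gamma_{\mu}\,\mathrm{d}^{q}x^{\mu},
\]
so that the integrand of \eqref{int-Dirac2} may be rewritten with the factor $\Psi(u^{\alpha}_{\dot{\beta}})\,\gamma_{\mu}\,\mathrm{d}^{q}x_{\mu}$ explicitly isolated, the remaining scalar being the quotient of $\partial^{q}u^{\alpha}_{\dot{\beta}}/\partial^{q}x^{\mu}$ by the spinor difference $qu^{\alpha}_{\dot{\beta}}(x_{\mu})-(qu)^{\alpha}_{\dot{\beta}}(x_{0})$ in the denominator.

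Next I would argue that this scalar factor is constant along $\Gamma_{q}$. Writing $qb$ for the spinor difference in the denominator---with $b$ nonzero precisely by the hypothesis $b\neq 0$---and using that $\partial^{q}u^{\alpha}_{\dot{\beta}}/\partial^{q}x^{\mu}$ is the $q$-difference quotient of the very same spinor variable, the quotient collapses to the contour-constant $1/(qb)$, which can then be pulled outside the integral. This yields
\[
\oint_{\Gamma_{q}}\frac{\Psi(u^{\alpha}_{\dot{\beta}}(x_{\mu}))\,D^{q}_{\mu}u^{\alpha}_{\dot{\beta}}}{qu^{\alpha}_{\dot{\beta}}(x_{\mu})-(qu)^{\alpha}_{\dot{\beta}}(x_{0})} = \frac{1}{qb}\oint_{\Gamma_{q}}\Psi(u^{\alpha}_{\dot{\beta}})\,\gamma_{\mu}\,\mathrm{d}^{q}x_{\mu}.
\]
Equating the left-hand side with the value $\tfrac{1}{q}\sum_{n=0}^{\infty}[\gamma_{\mu}\Psi((qu)^{\alpha}_{\dot{\beta}}(x_{0}))]^{n}$ supplied by \eqref{int-Dirac2}, absorbing the residual $q$-scaling of the argument $(qu)^{\alpha}_{\dot{\beta}}(x_{0})\mapsto u^{\alpha}_{\dot{\beta}}(x_{0})$ into the normalisation of $b$, and matching the common $\gamma_{\mu}$-structure inside the geometric series, I would solve for the integral to recover \eqref{integral-formula} exactly.

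The hard part is the constancy claim: one must show that the quotient of $\partial^{q}u^{\alpha}_{\dot{\beta}}/\partial^{q}x^{\mu}$ by $qu^{\alpha}_{\dot{\beta}}(x_{\mu})-(qu)^{\alpha}_{\dot{\beta}}(x_{0})$ is genuinely independent of the integration point, so that it factors out of $\oint_{\Gamma_{q}}$ as $1/(qb)$. This rests on the (quasi-)linear dependence of the $q$-spinor variable $u^{\alpha}_{\dot{\beta}}$ on $x_{\mu}$ together with the $q$-derivative rules, and it must be reconciled with the non-commutativity of the Dirac matrices $\gamma_{\mu}$, so that the matrix-valued geometric series $\sum_{n}[\gamma_{\mu}\Psi(u^{\alpha}_{\dot{\beta}}(x_{0}))]^{n}$ is the correct limit rather than a purely formal scalar rearrangement. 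The hypothesis $b\neq 0$ is exactly what legitimises the final division.
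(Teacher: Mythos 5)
Your proposal takes a genuinely different route from the paper, and unfortunately the route does not work. The paper's proof is built around an auxiliary object that your argument never introduces: the $q$-differential equation $D^{q}_{\mu}\Psi(u^{\alpha}_{\dot{\beta}})-b\Psi(u^{\alpha}_{\dot{\beta}})=0$ of (\ref{Differential}). That equation is where $b$ acquires its meaning --- it is the eigenvalue in $D^{q}_{\mu}\Psi=b\Psi$, and the theorem is really an integral representation for solutions of that equation. The paper multiplies (\ref{Differential}) by $\mathrm{d}^{q}x_{\mu}$, rewrites $\partial^{q}\Psi/\partial^{q}u^{\alpha}_{\dot{\beta}}$ as the finite-difference quotient (\ref{eqdiff3}), integrates over $\Gamma_{q}$, and applies (\ref{int-Dirac2}) to the left-hand side; the right-hand side is then $b\oint_{\Gamma_{q}}\Psi\,\mathrm{d}^{q}x_{\mu}$, and dividing by $b$ (this is exactly why $b\neq 0$ is required) gives (\ref{integral-formula}). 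In your proof, $b$ is instead defined as a geometric constant attached to the contour and the spinor variable $u^{\alpha}_{\dot{\beta}}$, with no reference to $\Psi$ whatsoever; even if your other steps were sound, you would have proved a different statement with a different $b$.

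The decisive gap is your constancy claim. The denominator $qu^{\alpha}_{\dot{\beta}}(x_{\mu})-(qu)^{\alpha}_{\dot{\beta}}(x_{0})$ in (\ref{int-Dirac2}) is a Cauchy-type kernel: $x_{0}$ is fixed, but $x_{\mu}$ is the integration variable running over $\Gamma_{q}$, so this difference varies from point to point along the contour. The numerator $\partial^{q}u^{\alpha}_{\dot{\beta}}/\partial^{q}x^{\mu}$ may well be contour-constant when $u^{\alpha}_{\dot{\beta}}$ depends linearly on $x_{\mu}$, but the quotient of a constant by a varying kernel is not constant unless $u^{\alpha}_{\dot{\beta}}$ is itself constant along $\Gamma_{q}$, which would collapse the whole construction. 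Consequently the factor you call $1/(qb)$ cannot be pulled outside the integral; if it could, the nontrivial formula (\ref{int-Dirac2}) would reduce to a tautology. Finally, your last step --- ``absorbing the residual $q$-scaling $(qu)^{\alpha}_{\dot{\beta}}(x_{0})\mapsto u^{\alpha}_{\dot{\beta}}(x_{0})$ into the normalisation of $b$'' --- is not legitimate for a general $\Psi$: these are values of $\Psi$ at different arguments sitting inside the summands $\bigl[\gamma_{\mu}\Psi(\cdot)\bigr]^{n}$ for every $n$, and no scalar rescaling of $b$ converts one series into the other. To repair the proof you should abandon the change-of-measure strategy and instead follow the eigenvalue-equation route: impose (\ref{Differential}), use (\ref{eqdiff3}) together with (\ref{relation}), (\ref{differentialx}), (\ref{differentialx-q}) to recast it as (\ref{eqdiff4}), and only then integrate and invoke (\ref{int-Dirac2}).
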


\begin{proof}
To formulate we can consider the following differential equation in $q$- spinor variables

\begin{equation}\label{Differential}
D^{q}_{\mu}\Psi (u^{\alpha}_{\dot{\beta}})-b\Psi (u^{\alpha}_{\dot{\beta}})=0.
\end{equation}

Multiplying on both sides of (\ref{Differential}) by $\mathrm{d}^{q}x_{\mu}$,  and using (\ref{Dirac-q}) we get

\begin{align}\label{eqdiff}
\begin{split}
\frac{\partial^{q}\Psi (u^{\alpha}_{\dot{\beta}})}{\partial^{q}u^{\alpha}_{\dot{\beta}}}D^{q}_{\mu}u^{\alpha}_{\dot{\beta}}\mathrm{d}^{q}x_{\mu} & = \ b\Psi (u^{\alpha}_{\dot{\beta}})\mathrm{d}^{q}x_{\mu}\\
\frac{\partial^{q}\Psi (u^{\alpha}_{\dot{\beta}})}{\partial^{q}u^{\alpha}_{\dot{\beta}}}\gamma_{\mu}\frac{\partial^{q}u^{\alpha}_{\dot{\beta}}}{\partial^{q}x_{\mu}}\mathrm{d}^{q}x_{\mu}& = \ b\Psi (u^{\alpha}_{\dot{\beta}})\mathrm{d}^{q}x_{\mu},
\end{split}
\end{align}

taking into account (\ref{relation}), (\ref{differentialx}) and (\ref{differentialx-q}), we  can   rewrite  (\ref{eqdiff}) as

\begin{equation}\label{eqdiff2}
\frac{\partial^{q}\Psi (u^{\alpha}_{\dot{\beta}})}{\partial^{q}u^{\alpha}_{\dot{\beta}}}D_{\mu}^{q}u^{\alpha}_{\dot{\beta}}  =  b\Psi (u^{\alpha}_{\dot{\beta}})\mathrm{d}^{q}x_{\mu},
\end{equation}

the term $\frac{\partial^{q}\Psi (u^{\alpha}{\dot{\beta}})}{\partial^{q}u^{\alpha}{\dot{\beta}}}$ may be written in the following form

\begin{equation}\label{eqdiff3}
\frac{\partial^{q}\Psi (u^{\alpha}_{\dot{\beta}})}{\partial^{q}u^{\alpha}_{\dot{\beta}}}=\dfrac{q\Psi (u^{\alpha}_{\dot{\beta}})}{qu^{\alpha}_{\dot{\beta}}-(qu)^{\alpha}_{\dot{\beta}}}, 
\end{equation}

substituting (\ref{eqdiff3}) into (\ref{eqdiff2}) gives

\begin{equation}\label{eqdiff4}
\left[\frac{q\Psi (u^{\alpha}_{\dot{\beta}})}{qu^{\alpha}_{\dot{\beta}}-(qu)^{\alpha}_{\dot{\beta}}}\right]D_{\mu}^{q}u^{\alpha}_{\dot{\beta}}  =  b\Psi (u^{\alpha}_{\dot{\beta}})\mathrm{d}^{q}x_{\mu},
\end{equation}

we continue in this fashion 
integrating over $\Gamma_{q}$ on both sides for $x_{0}\in\Gamma_{q}$ , and using (\ref{int-Dirac2}), finally we get

\begin{align*}
\frac{1}{qb}\left\lbrace \sum\limits_{n=0}^{\infty}\left[\gamma_{\mu}\Psi (u^{\alpha}_{\dot{\beta}}(x_{0}))\right]^{n}  \right\rbrace  = 
\oint_{\Gamma_{q}}\Psi (u^{\alpha}_{\dot{\beta}})\mathrm{d}^{q}x_{\mu}, \quad b\neq 0,
\end{align*}

which is our claim. This expression is called {\em the  $q$-spinor integral formula for} $\Psi (u^{\alpha}_{\dot{\beta}}(x_{\mu}))$.  
\end{proof}

The same reasoning applies to the differential equation $D^{q}_{\mu}\Psi ((qu)^{\alpha}_{\dot{\beta}})-b\Psi ((qu)^{\alpha}_{\dot{\beta}})=0$ to obtain the integral in $q$-spinor variables 

\begin{equation}
\frac{1}{qb}\left\lbrace \sum\limits_{n=0}^{\infty}\left[\gamma_{\mu}\Psi ((qu)^{\alpha}_{\dot{\beta}}(x_{0}))\right]^{n}\right\rbrace = \oint_{\Gamma_{q}}\Psi ((qu)^{\alpha}_{\dot{\beta}}(x))\mathrm{d}^{q}x_{\mu}.
\end{equation}

\begin{remark}
The expression (\ref{integral-formula}) also  can be expressed in virtue of (\ref{differentialx}) as
\begin{align}\label{integral-formula1}
\frac{\gamma^{\mu}}{qb}\left\lbrace\sum\limits_{n=0}^{\infty}\left[\gamma_{\mu}\Psi (u^{\alpha}_{\dot{\beta}}(x_{0}))\right]^{n}  \right\rbrace  = 
\oint_{\Gamma_{q}}\Psi (u^{\alpha}_{\dot{\beta}})D^{\mu}_{q}x.
\end{align}
\end{remark}

Notice that the expression (\ref{integral-formula}) is not implies the final solution of  (\ref{Differential}).

\begin{remark}
Similar arguments apply to  the new $q$-differential operator (\ref{q-D-e}), resulting
\begin{align}\label{integral-formula-eq}
\frac{\mathrm{e}^{\mu}}{qb}\left\lbrace\sum\limits_{n=0}^{\infty}\left[\mathrm{e}_{\mu}\Psi (u^{\alpha}_{\dot{\beta}}(x_{0}))\right]^{n}  \right\rbrace  = 
\oint_{\Gamma_{q}}\Psi (u^{\alpha}_{\dot{\beta}})D^{q}_{\mu}x.
\end{align}
\end{remark}

\begin{proof}
It is sufficient to  replace  $\gamma_{\mu}$ by $\mathrm{e}_{\mu}$, to obtain (\ref{integral-formula-eq}), considering the differential equation in $q$-spinor variables of the form
 \begin{equation}\label{q-new-Dif}
 D^{q}\Psi (u^{\alpha}_{\dot{\beta}})-b\Psi (u^{\alpha}_{\dot{\beta}})=0,
 \end{equation}
 where $D^{q}$ is the new $q$-differential operator given by (\ref{q-D-e}).
\end{proof}

However  we will can solve  differential equations in $q$-spinor  variables of the form

\begin{align}\label{equation-spinorialfg}
D^{q}_{\mu}\psi (u^{\alpha}_{\dot{\beta}})-b\phi (u^{\alpha}_{\dot{\beta}}) & = \ 0,\\
\label{equation-spinorialfg2}D^{q}\psi (u^{\alpha}_{\dot{\beta}})-a\phi (u^{\alpha}_{\dot{\beta}}) & = \ 0.
\end{align}

which we will show in the following section.

\section{Differential equations in $q$-spinor variables}\label{dsp2}

In order to obtain the solution of (\ref{equation-spinorialfg}), it is necessary to put the following  condition on $\psi$

\begin{equation}\label{Doint}
\oint_{\Gamma_{q}}D_{\mu}^{q}\psi (u^{\alpha}_{\dot{\beta}}(x))\mathrm{d}^{q}x^{\mu} = \psi (u^{\alpha}_{\dot{\beta}}(x_{0})), \quad x_{0}\in\Gamma_{q}.
\end{equation}

Therefore, integrating  both sides with respect to $x^{\mu}$ in (\ref{equation-spinorialfg}) , applying (\ref{Doint}),  we get

\begin{align}
\begin{split}
\oint_{\Gamma_{q}}D^{q}_{\mu}\psi (u^{\alpha}_{\dot{\beta}})\mathrm{d}^{q}x^{\mu}& = 	\ b\oint_{\Gamma_{q}}\phi (u^{\alpha}_{\dot{\beta}})\mathrm{d}^{q}x^{\mu},\\
\psi  (u^{\alpha}_{\dot{\beta}}(x_{0})) & = \ b\oint_{\Gamma_{q}}\phi (u^{\alpha}_{\dot{\beta}})\mathrm{d}^{q}x^{\mu} ,\\
\psi (u^{\alpha}_{\dot{\beta}}(x_{0})) & = \ b\oint_{\Gamma_{q}}\phi (u^{\alpha}_{\dot{\beta}})\mathrm{d}^{q}x^{\mu}
\end{split}
\end{align}

and applying (\ref{integral-formula}) we obtain

\begin{equation}\label{solution}
\psi (u^{\alpha}_{\dot{\beta}}(x_{0}))  = \frac{1}{q}\left\lbrace \sum\limits_{n=0}^{\infty}\left[\gamma_{\mu}\phi (u^{\alpha}_{\dot{\beta}}(x_{0}))\right]^{n}\right\rbrace .
\end{equation}

\begin{lemma}
We can generalize the solution (\ref{solution}) for all $x\in\Gamma_{q}$ as
\begin{equation}\label{solution1}
\psi (u^{\alpha}_{\dot{\beta}}(x))  = \frac{1}{q}\left\lbrace \sum\limits_{n=0}^{\infty}\left[\gamma_{\mu}\phi (u^{\alpha}_{\dot{\beta}}(x))\right]^{n}\right\rbrace .
\end{equation}
\end{lemma}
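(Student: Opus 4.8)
The plan is to exploit the fact that the base point $x_{0}$ entered the derivation of (\ref{solution}) \emph{only} through the hypothesis $x_{0}\in\Gamma_{q}$; no further property of $x_{0}$ (its location relative to other points of the contour, any distinguished role, etc.) was ever used. Consequently the entire chain of equalities that produced (\ref{solution}) remains valid verbatim when $x_{0}$ is replaced by an arbitrary point $x\in\Gamma_{q}$, and a simple relabelling then yields (\ref{solution1}). So the whole lemma is, at bottom, a ``moving base point'' statement.

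Concretely, I would re-run the preceding argument at a generic point $x\in\Gamma_{q}$. First, start from the differential equation (\ref{equation-spinorialfg}), namely $D^{q}_{\mu}\psi(u^{\alpha}_{\dot{\beta}})-b\,\phi(u^{\alpha}_{\dot{\beta}})=0$, and integrate both sides against $\mathrm{d}^{q}x^{\mu}$ over $\Gamma_{q}$. Second, invoke the normalisation condition (\ref{Doint}) now anchored at $x$, i.e. $\oint_{\Gamma_{q}}D^{q}_{\mu}\psi(u^{\alpha}_{\dot{\beta}})\,\mathrm{d}^{q}x^{\mu}=\psi(u^{\alpha}_{\dot{\beta}}(x))$, which gives $\psi(u^{\alpha}_{\dot{\beta}}(x))=b\oint_{\Gamma_{q}}\phi(u^{\alpha}_{\dot{\beta}})\,\mathrm{d}^{q}x^{\mu}$. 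Third, apply the $q$-spinor integral formula (\ref{integral-formula}) with $\Psi$ taken to be $\phi$ and base point $x$; this evaluates the remaining integral as $\frac{1}{qb}\sum_{n=0}^{\infty}[\gamma_{\mu}\phi(u^{\alpha}_{\dot{\beta}}(x))]^{n}$. The prefactor $b$ then cancels against the $1/(qb)$ of the integral formula, leaving exactly the coefficient $1/q$ of (\ref{solution1}).

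An equivalent and cleaner presentation is to define
\[
F(x):=\psi(u^{\alpha}_{\dot{\beta}}(x))-\frac{1}{q}\sum_{n=0}^{\infty}\bigl[\gamma_{\mu}\phi(u^{\alpha}_{\dot{\beta}}(x))\bigr]^{n},\qquad x\in\Gamma_{q},
\]
and to observe that the computation above shows $F(x_{0})=0$ for \emph{every} choice $x_{0}\in\Gamma_{q}$, whence $F\equiv 0$ on $\Gamma_{q}$, which is precisely (\ref{solution1}).

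The step I expect to be the genuine obstacle is not the relabelling itself but the two uniformity assumptions it silently makes. First, one must check that the integral formula (\ref{integral-formula}) and the condition (\ref{Doint}) hold with the \emph{same} constants as the base point ranges over the whole contour, so that the identity is truly pointwise in $x$ rather than valid only at one marked point. Second, and more delicately, the geometric-type series $\sum_{n}[\gamma_{\mu}\phi(u^{\alpha}_{\dot{\beta}}(x))]^{n}$ must converge at each $x$; since this is a Neumann series in the matrix $\gamma_{\mu}\phi$, convergence requires a spectral-radius or norm bound such as $\|\gamma_{\mu}\phi(u^{\alpha}_{\dot{\beta}}(x))\|<1$, ideally uniformly on $\Gamma_{q}$. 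Establishing this bound, and thereby that $\psi$ is well defined and single valued as a function on the closed contour, is where the real analytic content of the lemma resides.
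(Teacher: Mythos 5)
Your proposal is correct and coincides with what the paper intends: the paper states this lemma with no proof at all, implicitly relying on exactly your ``moving base point'' observation that the derivation of (\ref{solution}) used nothing about $x_{0}$ beyond $x_{0}\in\Gamma_{q}$, so re-running it at an arbitrary $x\in\Gamma_{q}$ (with (\ref{Doint}) and (\ref{integral-formula}) anchored at $x$) gives (\ref{solution1}) by relabelling. Your closing caveats --- that (\ref{Doint}) must hold uniformly as the base point varies and that the Neumann series $\sum_{n}\left[\gamma_{\mu}\phi\right]^{n}$ needs a norm bound to converge --- are genuine gaps in the paper itself, which it never addresses, so your write-up is if anything more careful than the original.
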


To solve (\ref{equation-spinorialfg2}) we will consider the following remarks

\begin{remark}
We begin by considering the new $q$-differential operator defined in (\ref{q-D-e}). In order to evaluate the integral $\oint_{\Gamma_{q}} D^{q} \Psi$, we shall proceed in a manner analogous to the proof of Theorem 2.9 in \cite{Jaramillo2023} (see also the proof of Proposition 3.10 therein). This approach allows us to derive expressions similar to (\ref{solution}) and (\ref{solution1}), namely:
\begin{align}
\label{Dq}\oint_{\Gamma_{q}} D^{q} \psi (u^{\alpha}_{\dot{\beta}}(x_{0})) &= \psi (u^{\alpha}_{\dot{\beta}}(x_{0})), \\
\label{Dq1}\psi (u^{\alpha}_{\dot{\beta}}(x_{0})) &= \frac{1}{q} \left\lbrace \sum\limits_{n=0}^{\infty} \left[ \mathrm{e}_{\mu} \phi (u^{\alpha}_{\dot{\beta}}(x_{0})) \right]^{n} \right\rbrace,
\end{align}
\end{remark}

Now, we will consider the following examples 

\begin{example}
Consider  the differential equation in $q$-spinor variables of the form

\begin{equation}\label{maxwell}
D^{q}_{\mu}\Psi (u^{\alpha}_{\dot{\beta}}) - e\gamma^{\mu}A^{q}_{\mu}(x)\Psi (u^{\alpha}_{\dot{\beta}})-m g(u^{\alpha}_{\dot{\beta}})=0, \quad e\in\mathds{R},
\end{equation}

being $A^{q}_{\mu}(x)$  a $q$-potential function. Now, to solve (\ref{maxwell}), we can proceed analogously to the solution of (\ref{equation-spinorialfg}) applying (\ref{Doint}) 

\begin{align}
\begin{split}
\oint_{\Gamma_{q}}D^{q}_{\mu}\Psi (u^{\alpha}_{\dot{\beta}})\mathrm{d}^{q}x^{\mu}- e\gamma^{\mu}\oint_{\Gamma_{q}}A^{q}_{\mu}(x)\Psi (u^{\alpha}_{\dot{\beta}})\mathrm{d}^{q}x^{\mu}& = \ m\oint_{\Gamma_{q}}g(u^{\alpha}_{\dot{\beta}})\mathrm{d}^{q}x^{\mu},\\
\Psi (u^{\alpha}_{\dot{\beta}}(x_{0}))-e\gamma^{\mu}\oint_{\Gamma_{q}}A^{q}_{\mu}(x)\Psi (u^{\alpha}_{\dot{\beta}}(x))\mathrm{d}^{q}x^{\mu} & = \ m\oint_{\Gamma_{q}}g(u^{\alpha}_{\dot{\beta}}(x))\mathrm{d}^{q}x^{\mu},
\end{split}
\end{align}

and using (\ref{integral-formula}) we obtain finally

\begin{equation}\label{sol2}
\Psi (u^{\alpha}_{\dot{\beta}}(x_{0}))+
\frac{e}{qm}\left\lbrace\sum\limits_{n=0}^{\infty}\left[\gamma_{\mu}A^{q}_{\mu}(x_{0})\Psi (u^{\alpha}_{\dot{\beta}}(x_{0}))\right]^{n} \right\rbrace
  =  \frac{1}{q}\left\lbrace\sum\limits_{n=0}^{\infty}\left[\gamma_{\mu}g(u^{\alpha}_{\dot{\beta}}(x_{0}))\right]^{n} \right\rbrace .
\end{equation}
\end{example}

Now, let us see other example. 

\begin{example}\label{Dir-ex}
Consider the differential equation in $q$-spinor variables (similar to (\ref{maxwell})) of the form

\begin{equation}\label{maxwell1}
\gamma^{\mu}\partial^{q}_{\mu}\Psi (u^{\alpha}_{\dot{\beta}}) - e\gamma^{\mu}A^{q}_{\mu}(x)\Psi (u^{\alpha}_{\dot{\beta}})-m\Psi (u^{\alpha}_{\dot{\beta}})=0, \quad e\in\mathds{R},
\end{equation}

where $A^{q}_{\mu}(x)$ is the same $q$-potential function of above example. This follows by the same method as in the above example, obtaining

\begin{equation}\label{sol3}
\Psi (u^{\alpha}_{\dot{\beta}}(x_{0}))+
\frac{e}{qm}\left\lbrace\sum\limits_{n=0}^{\infty}\left[\gamma_{\mu}A^{q}_{\mu}(x_{0})\Psi (u^{\alpha}_{\dot{\beta}}(x_{0}))\right]^{n} \right\rbrace
  =  \frac{1}{q}\left\lbrace\sum\limits_{n=0}^{\infty}\left[\gamma_{\mu}\Psi (u^{\alpha}_{\dot{\beta}}(x_{0}))\right]^{n} \right\rbrace .
\end{equation}

\begin{remark}
The expression (\ref{maxwell1}) can be written as
\begin{equation}
(\gamma^{\mu}\mathcal{D}^{q}_{\mu}-m)\Psi (u^{\alpha}_{\dot{\beta}}) =0,
\end{equation}
where $\mathcal{D}^{q}_{\mu}=\partial^{q}_{\mu}-eA^{q}_{\mu}(x)$ is the \emph{ q- covariant derivative}.
\end{remark}
\end{example}

\begin{example}
Let us consider the differential equation in $q$- spinor variables $aD^{q}\psi +b\mathrm{e}_{\mu}B^{\mu}_{q}(x)\phi (u^{\alpha}_{\dot{\beta}}(x))=0$, where $B^{\mu}_{q}(x)$ is some $q$-arbitrary potential function. Therefore

\begin{equation}\label{eqD}
D^{q}\psi = -\frac{b}{a}\mathrm{e}_{\mu}B^{\mu}_{q}(x)\phi (u^{\alpha}_{\dot{\beta}}(x)),
\end{equation}

Using (\ref{q-D-e}) (only the second contribution) we get

\begin{align*}
\mathrm{e}_{\mu}\frac{\partial_{q}\psi}{\partial_{q}x_{\nu}}  & = \ -\frac{b}{a}\mathrm{e}_{\mu}B^{\mu}_{q}(x)\phi (u^{\alpha}_{\dot{\beta}}(x)),
\end{align*}

multiplying both sides by $\mathrm{d}^{q}x_{\nu}$ results and applying (\ref{differentialx-e})

\begin{align*}
\mathrm{e}_{\mu}\frac{\partial_{q}\psi}{\partial_{q}x_{\nu}}\mathrm{d}^{q}x_{\nu}  & = \ -\frac{b}{a}\mathrm{e}_{\mu}B^{\mu}_{q}(x)\phi (u^{\alpha}_{\dot{\beta}}(x))\mathrm{d}^{q}x_{\nu}\\
\frac{\partial_{q}\psi}{\partial_{q}x_{\nu}}D^{q}x_{\nu}& = \ -\frac{b}{a}\mathrm{e}_{\mu}B^{\mu}_{q}(x)\phi (u^{\alpha}_{\dot{\beta}}(x))\mathrm{d}^{q}x_{\nu}\\
D^{q}\psi & = \  -\frac{b}{a}\mathrm{e}_{\mu}B^{\mu}_{q}(x)\phi (u^{\alpha}_{\dot{\beta}}(x))\mathrm{d}^{q}x_{\nu},
\end{align*}

integrating over the closed contour $\Gamma_{q}$ , considering $x_{0}\subset\Gamma_{q}$ and using (\ref{integral-formula-eq}), (\ref{Dq}),  (\ref{Dq1}) and (\ref{Dq2})  into (\ref{eqD}) we get

\begin{equation}
\psi (u^{\alpha}_{\dot{\beta}}(x_{0})) = -\frac{1}{aq}\mathrm{e}^{\mu}\left\lbrace\sum\limits_{n=0}^{\infty}\left[\mathrm{e}_{\mu}B^{\mu}_{q}(x_{0})\phi (u^{\alpha}_{\dot{\beta}}(x_{0}))\right]^{n}\right\rbrace .
\end{equation}

and for all $x\in\Gamma_{q}$

\begin{equation*}
\psi (u^{\alpha}_{\dot{\beta}}(x)) = -\frac{1}{aq}\mathrm{e}^{\mu}\left\lbrace\sum\limits_{n=0}^{\infty}\left[\mathrm{e}_{\mu}B^{\mu}_{q}(x)\phi (u^{\alpha}_{\dot{\beta}}(x))\right]^{n}\right\rbrace .
\end{equation*}
\end{example}

\section{Discussion and suggestions for further work}

In Section \ref{sp2}, the equations (\ref{chain}), (\ref{q-D-e1}) and (\ref{Dirac-spin-var}) describe some $q$-differential operators for $q$-spinor variables. Respect to (\ref{chain}), we can said that the function on the $q$-spinor variables does not depend on ly on the variable $u^{\alpha}_{\dot{\beta}}$ but also on $x_{\nu}$. From this result, the new $q$- differential operator for $q$-spinor variables expressed by (\ref{q-D-e1}) was  proposed. This operator is motivated from the  construction of the any differential operator that satisfy the property $D_{q}^{2}=-\frac{\partial^{2}}{\partial_{q}x^{2}_{\mu}}-\frac{\partial^{2}_{q}}{\partial_{q}x_{\nu}^{2}}$ for all $\mu , \nu =1,2,...,n$.  To obtain $D^{2}_{q}$ it is necessary to use the relations (\ref{e-mu}) and (\ref{e-nu}). This operator differs from the classical Dirac and Cauchy–Riemann operators discussed in \cite{Coulembier-Sommen2011, Faustino-Kalhler2007, Yafang-Jinyuan2002, Gurlebeck-Hommel, Jaramillo2024}, and may be regarded as a $q$-deformed version of the operator $D$ introduced in \cite{Jaramillo2024}. In the case of the Dirac operator for $q$-spinor variables, defined by (\ref{Dirac-spin-var}), has been stablished from the Remarks \ref{rem3} and \ref{rem4}. This operator is expressed in terms of the $q$-deformed Dirac matrices mentioned by Schmidt \cite{Schmidt}. From the $q$-deformed Dirac operator, we define the integral formulas in $q$-spinor variables with  the aim to solve  the differential equations in $q$-spinor  variables. Physically we can said that the  Example \ref{Dir-ex}  describes the \emph{Dirac equation for the electromagnetic case} on the $q$-spinor  variables, and furthermore the potential $A^{q}_{\mu}(x)$  can be interpreted as the \textit{q- electromagnetic potential}.  There are two further topics arising from this paper which are worth investigation., there is the problem of describing the {\em Maxwell Electrodynamic Algebra} which is defined by the following commutation relations 

\begin{align}
A_{\mu}A^{\mu} & = \ \vert A_{0}\vert^{2}-A_{X}^{2}, \quad X=1,2,3,\\
f_{\mu\nu} & = \ \partial_{\mu}A_{\nu}-\partial_{\nu}A_{\mu},\\
\partial_{\mu}A^{\mu}=\partial_{\nu}A^{\nu} & = \ 0,\\
D_{\mu}  & = \ \partial_{\mu}-eA_{\mu}, \quad e\in\mathds{R}\\
\partial^{\mu}f_{\mu\nu} & = \ \Gamma_{\nu},\\
\partial_{\nu}f_{0}^{\mu\nu} & = \ 0,
\end{align}

where $f_{0}^{\mu\nu}=\varepsilon^{\mu\nu 0}f_{\mu\nu}$ , $f^{\mu\nu}=0$ if $\mu=\nu$ and $f^{\mu\nu}\neq 0$ in otherwise. Finally, from  (\ref{maxwell1}) and taking into account the above, one can propose the \emph{$q-$ Dirac - Maxwell algebra}, which is subject to relations

\begin{align}
f^{q}_{\mu\nu} & = \ \mathcal{D}^{q}_{\mu}A^{q}_{\nu}-\mathcal{D}^{q}_{\nu}A^{q}_{\mu},\\
\label{Covariant}\textit{\textbf{D}}^{q} & = \ \boldsymbol{\partial}^{q} - e\textit{\textbf{A}} \quad e\in\mathds{R},
\end{align}

being $\textit{\textbf{D}}^{q} = \boldsymbol{\gamma}^{\mu}\mathcal{D}^{q}_{\mu}, \boldsymbol{\partial}^{q} = \boldsymbol{\gamma}^{\mu}\partial^{q}_{\mu}$ and  $\textbf{e} = \gamma^{\mu}A_{\mu}$,  where $\boldsymbol{\gamma}^{\mu}, \mu = 1 \cdots n $ are the generators for the Clifford algebras, and (\ref{Covariant}) is called the {\em Covariant Derivative.} Other suggestion is the formulation of the $q-$ \emph{Real Spinor Calculus} based on the work of Zatloukal \cite{Zatloukal2022}, which is defined by the following expressions for the derivatives

\begin{equation}\label{der-sp}
\frac{\boldsymbol{\partial}_{q}\psi}{\boldsymbol{\partial}_{q}\textit{\textbf{x}}_{\dot{\alpha}}^{\beta}}  = \frac{\psi (\textit{\textbf{x}}^{\beta}_{\dot{\alpha}}+q\textit{\textbf{u}}^{\beta}_{\dot{\alpha}})-q\psi (\textit{\textbf{u}}^{\beta}_{\dot{\alpha}})}{\textit{\textbf{x}}^{\beta}_{\dot{\alpha}}},
\end{equation}

where $\textit{\textbf{u}}^{\beta}_{\dot{\alpha}}=(\gamma_{\mu}\gamma_{\nu}u)^{\beta}_{\dot{\alpha}}$. The chain rule

\begin{equation}\label{ch2}
\frac{\boldsymbol{\partial}_{q}\Psi}{\boldsymbol{\partial}_{q}x_{j}}=\frac{\boldsymbol{\partial}_{q}\Psi}{\boldsymbol{\partial}_{q}\textit{\textbf{x}}^{\beta}_{\dot{\alpha}}}\frac{\boldsymbol{\partial}_{q}\textit{\textbf{x}}^{\beta}_{\dot{\alpha}}}{\boldsymbol{\partial}_{q}x_{j}},
\end{equation}

the $q$-difference operator for $q-$ real spinor variables 

\begin{align}
\label{Dirac-real}\textit{\textbf{D}}_{2}^{q} &  = \   \hat{\gamma}_{2}\frac{\boldsymbol{\partial}_{q}}{\boldsymbol{\partial}_{q}x_{2}}, \\
\label{gamma5}\textit{\textbf{D}}^{q}_{j}  & = \  i\hat{\gamma}_{5}\frac{\boldsymbol{\partial}_{q}}{\boldsymbol{\partial}_{q}x_{j}},\\
\label{25}\underline{\textit{\textbf{D}}}_{j}^{q} & = \  i\hat{\gamma}_{2}\hat{\gamma}_{5}\frac{\boldsymbol{\partial}_{q}}{\boldsymbol{\partial}_{q}x_{j}},  \quad j=1,..,5.
\end{align}

For a function $\psi : (\textit{\textbf{v}}_{\dot{k}}, \textit{\textbf{p}}_{\dot{k}})\longrightarrow\mathds{R}^{m}$, the $q$-conjugated derivatives are defined as

\begin{align}
\label{v-der}\frac{\boldsymbol{\partial}_{q}\psi}{\boldsymbol{\partial}_{q}\textit{\textbf{v}}_{\dot{k}}} & = \ \frac{\psi (\textit{\textbf{v}}_{\dot{k}}+q\textit{\textbf{x}}^{\beta}_{\dot{\alpha}})-q\psi (\textit{\textbf{x}}^{\beta}_{\dot{\alpha}})}{\textit{\textbf{v}}_{\dot{k}}},\\
\label{p-der}\frac{\boldsymbol{\partial}_{q}\psi}{\boldsymbol{\partial}_{q}\textit{\textbf{p}}_{\dot{k}}} & = \ \frac{\psi (\textit{\textbf{p}}_{\dot{k}}+q\textit{\textbf{u}}^{\beta}_{\dot{\alpha}})-q\psi (\textit{\textbf{u}}^{\beta}_{\dot{\alpha}})}{\textit{\textbf{p}}_{\dot{k}}}.
\end{align}

The $q$-difference operators associated to  conjugated real spinor variables are given by

\begin{align}
\label{Dv}\textit{\textbf{D}}_{q} & = \ \frac{\boldsymbol{\partial}_{q}}{\boldsymbol{\partial}_{q}\textit{\textbf{v}}_{\dot{0}}}+\gamma_{1}\gamma_{3}\frac{\boldsymbol{\partial_{q}}}{\boldsymbol{\partial}_{q}\boldsymbol{v}_{\dot{1}}}+i\gamma_{3}\gamma_{\dot{0}}\frac{\boldsymbol{\partial}_{q}}{\boldsymbol{\partial}_{q}\textit{\textbf{v}}_{\dot{2}}}+\gamma_{1}\gamma_{2}\frac{\boldsymbol{\partial}_{q}}{\boldsymbol{\partial}_{q}\textit{\textbf{v}}_{\dot{3}}},\\
\label{Dp}\textit{\textbf{D}}^{\prime}_{q} & = \ \frac{\boldsymbol{\partial}_{q}}{\boldsymbol{\partial}_{q}\textit{\textbf{p}}_{\dot{0}}}+\gamma_{1}\gamma_{3}\frac{\boldsymbol{\partial_{q}}}{\boldsymbol{\partial}_{q}\textit{\textbf{p}}_{\dot{1}}}+i\gamma_{3}\gamma_{0}\frac{\boldsymbol{\partial}_{q}}{\boldsymbol{\partial}_{q}\textit{\textbf{p}}_{\dot{2}}}+\gamma_{1}\gamma_{2}\frac{\boldsymbol{\partial}_{q}}{\boldsymbol{\partial}_{q}\textit{\textbf{p}}_{\dot{3}}},
\end{align}

and  the $q$-spinor real integral formulas of the $q$-spinor conjugated variables are given by

\begin{align}
\label{i1}\int_{\Omega_{q}}\frac{\psi (q\textit{\textbf{v}}_{\dot{k}})\mathbf{d}_{q}\textit{\textbf{v}}_{\dot{k}}}{\textit{\textbf{v}}_{\dot{k}}-\textit{\textbf{x}}^{\beta}_{\dot{\alpha}}} & = \ q\sum\limits_{n=0}^{\infty}[\gamma^{\mu}\gamma^{\nu}\psi (q\textit{\textbf{x}}^{\beta}_{\dot{\alpha}})]^{n},\\
\label{i2}\int_{\Omega_{q}}\frac{\psi [(1-q^{-1})\textit{\textbf{v}}_{\dot{k}}]\mathbf{d}_{q}\textit{\textbf{v}}_{\dot{k}}}{\textit{\textbf{v}}_{\dot{k}}-\textit{\textbf{x}}^{\beta}_{\dot{\alpha}}} & = \ \sum\limits_{n=0}^{\infty}[\gamma^{\mu}\gamma^{\nu}\psi [(1-q^{-1})\textit{\textbf{x}}^{\beta}_{\dot{\alpha}}]]^{n},\\
\label{i3}\int_{\Omega_{q}}\frac{\psi (q\textit{\textbf{p}}_{\dot{k}})\mathbf{d}_{q}\textit{\textbf{p}}_{\dot{k}}}{\textit{\textbf{p}}_{\dot{k}}-\textit{\textbf{u}}^{\beta}_{\dot{\alpha}}} & = \ q\sum\limits_{n=0}^{\infty}[\gamma^{\mu}\gamma^{\nu}\psi (q\textit{\textbf{u}}^{\beta}_{\dot{\alpha}})]^{n},\\
\label{i4}\int_{\Omega_{q}}\frac{\psi [(1-q^{-1})\textit{\textbf{p}}_{\dot{k}}]\mathbf{d}_{q}\textit{\textbf{p}}_{\dot{k}}}{\textit{\textbf{p}}_{\dot{k}}-\textit{\textbf{u}}^{\beta}_{\dot{\alpha}}} & = \ \sum\limits_{n=0}^{\infty}[\gamma^{\mu}\gamma^{\nu}\psi [(1-q^{-1})\textit{\textbf{u}}^{\beta}_{\dot{\alpha}}]]^{n}.
\end{align}

The formulation of Dirac operators in $q$-deformed spinorial coordinates constitutes a significant development in the study of quantum theories defined on noncommutative spaces. These structures enable a rigorous extension of the foundations of quantum mechanics and quantum field theory to contexts where classical continuous symmetries are replaced by quantum symmetries, described in terms of Hopf algebras and quantum groups.

Even in its classical form, the Dirac operator plays a central role in the description of spin-$\frac{1}{2}$ particles, such as electrons, and provides a relativistic formulation of quantum mechanics. Its generalisation to the $q$-deformed framework incorporates effects arising from discretisation, quantum curvature, and underlying noncommutative geometries. This generalisation is particularly relevant in theoretical approaches that seek to unify quantum mechanics with gravitational phenomena.

From an algebraic perspective, $q$-deformed Dirac operators are intimately connected with representations of $q$-deformed Clifford algebras and with noncommutative geometries in the sense of Connes. Their study permits the characterisation of quantum manifolds equipped with spinorial structure, thereby paving the way for more general formulations of field theory on curved or quantum spacetimes.

In physical terms, these operators facilitate the modelling of fermionic dynamics in the presence of $q$-deformed external fields, including noncommutative electromagnetic potentials and generalised gauge fields. Notably, the inclusion of a $q$-deformed Maxwell algebra introduces novel forms of interaction between fields and particles, with potential applications to effective models of quantum gravity, quantum cosmology, and condensed matter systems exhibiting discrete or topological symmetries.

The most substantial contribution lies in the capacity of these operators to encapsulate, within a single mathematical framework, the interplay of quantum symmetries, generalised spinorial structures, and noncommutative dynamics. As such, they offer a powerful and versatile tool for probing novel physical regimes in which the geometry of spacetime itself is governed by quantum principles.

\bibliographystyle{plain}       
\bibliography{myBiblib}
 ------------------------------------------------------------------------
\end{document}